\pgfplotsset{compat=newest}
\newtheorem{lem}{Lemma}
\newcommand{\SIR}{\mathtt{SIR}}
\newcommand{\SINR}{\mathtt{SINR}}
\newcommand{\p}[2]{$P_{#1}^{#2}$}
\newcommand{\PL}[2]{\|{#1}\|^{#2}}
\newcommand{\al}{\alpha}
\newcommand{\E}{\mathbb{E}}
\def\D{\mathrm{d}}
\newcommand\blfootnote[1]{%
  \begingroup
  \renewcommand\thefootnote{}\footnote{#1}%
  \addtocounter{footnote}{-1}%
  \endgroup
}
\begin{document}
\bibliographystyle{ieeetr}
%
\title{Joint Backhaul-Access Analysis of Full Duplex Self-Backhauling Heterogeneous Networks}
\author{\IEEEauthorblockN{Ankit Sharma, Radha Krishna Ganti and  J. Klutto Milleth}}

\maketitle

\begin{abstract}

With the successful demonstration of in-band full-duplex (IBFD) transceivers, a new research dimension has been added to wireless networks.
This paper proposes a use case of this capability for IBFD self-backhauling 
heterogeneous networks (HetNet). IBFD self-backhauling in a HetNet refers to IBFD-enabled small cells backhauling themselves 
with macro cells over the wireless channel. Owing to their IBFD capability, the small cells simultaneously communicate over the 
access and backhaul links, using the same frequency band. The idea is doubly advantageous, as it obviates the need for fiber 
backhauling small cells every hundred meters and allows the access spectrum to be reused for backhauling at no extra cost.
This work considers the case of a two-tier cellular network with IBFD-enabled small cells, 
wirelessly backhauling themselves with conventional macro cells. For clear exposition, the case considered is that of FDD network, 
where within access and backhaul links, the downlink (DL) and uplink (UL) are frequency duplexed ($f1$, $f2$ respectively), while 
the total frequency spectrum used at access and backhaul ($f1+f2$) is the same. Analytical expressions for 
coverage and average downlink (DL) rate in such a network are derived using tools from the field of \emph{stochastic geometry}. 
It is shown that DL rate in such networks could be close to double that of a conventional TDD/FDD self-backhauling network, at the 
expense of reduced coverage due to higher interference in IBFD networks. For the proposed IBFD network, the conflicting aspects of 
increased interference on one side and high spectral efficiency on the other are captured into a mathematical model. The mathematical model
introduces an end-to-end joint analysis of backhaul (or fronthaul) and access links, in contrast to the largely available access-centric studies. \noindent \blfootnote{\noindent Ankit Sharma and Radha Krishna Ganti were with the Electrical Engineering Department at Indian Institute of 
Technology, Madras at the time of submission. Their contact emails are \{ankitsharma, rganti\}@ee.iitm.ac.in. Ankit Sharma is now with Cypress Semiconductors India Pvt. Ltd. (ansa@cypress.com). J. Klutto Milleth is with Center of Excellence in Wireless Technology, IIT-Madras Research Park. His contact email is klutto@cewit.org.in}.
\end{abstract}
\IEEEpeerreviewmaketitle
\section{Introduction}
Capacity demands in a wireless cellular system have been increasing at a rapid pace. The next move towards 5G network aims at increasing capacity 
of the current systems thousand fold \cite{Metis}. Since bandwidth demands have ever been exceeding the available spectrum, 
frequency reuse techniques are becoming increasingly important for cellular systems. The well studied dense heterogeneous network (HetNet) 
\cite{3gppHetNets} is one of the methods to increase capacity for future networks. Typically, HetNet consists of a macro base-station (M-BS) 
tier, serving high mobility users overlaid with operator deployed pico base-station (P-BS) tier (a.k.a. small cells) \cite{3gpp.36.932} for 
low mobility, dense user areas. Deploying a highly dense network of P-BSs is becoming increasingly worrisome \cite{Wireless2020} for 
operators. This is because fiber backhauling such P-BSs placed every few tens of meters is not a practically and economically 
viable option, especially in developing countries like India. The alternative is to employ wireless backhauling. Though wireless 
backhauling obviates the need for laying down high-speed/fiber links, it needs the operator to partition their highly priced spectrum into 
orthogonal access and backhauling resources, thereby resulting in lower spectral usage for user access.
\par
In-band full-duplex (IBFD) systems---another frequency reuse technique---present a scheme to wirelessly backhaul P-BSs with M-BSs without having to orthogonalize allocated spectrum between access and backhaul. The scheme consists of a two-tier cellular network where the P-BSs, being IBFD-enabled, backhaul themselves wirelessly with the M-BSs, which themselves are fiber-backhauled to the core network. The M-BSs exchange backhaul data with the P-BSs on the entire spectrum that the P-BSs use to transmit data to the users. M-BSs may also serve the users directly. 
Since practical IBFD radio systems (\cite{nadh2016linearization}, \cite{Arjun}, \cite{Jain:2011:PRF:2030613.2030647} and \cite{ASabharwal}) have already been demonstrated, the proposed scheme results in an amalgamation of two frequency reuse techniques working in tandem. 
To this end, the paper analyzes and gives key design insights 
for a future cellular network\footnote{Since the paper studies a two-tier HetNet architecture based on each tier being FDD in its own uplink and downlink, comparison of IBFD-enabled networks will be done with the conventional FDD systems (with no IBFD-enabled station) throughout the paper.} that leverages the efficiency of IBFD radios used in a wirelessly backhauled two-tier HetNet.

\subsection{Related Work}
For a self-backhauled two-tier HetNet, a model for joint analysis of backhaul-access links is required, which is a rather less studied topic. 
The topic finds mention in \cite{Katti}, where it is listed as one of the potential applications of IBFD radios. Work in \cite{BoyuLi} is 
an attempt in this direction, though the work develops on the basic assumption of one P-BS per user and inter P-BS interference 
has not been considered. Moreover, the paper only presents capacity results as a function of physical 
separation between the P-BS and M-BS while the overall coverage trends in such a two-tier network have not been analyzed. 
Perhaps a closely related work in this direction is found in \cite{Marios}, where the authors model a 
multiple-input-multiple-output (MIMO) IBFD P-BS and conventional half-duplex (HD) backhauling M-BS. The M-BSs only play the role of backhaul aggregators and do not provide access communication to users. The probability of successful transmissions is modeled 
as a product of independent successful transmissions for first hop (M-BS to P-BS) and second hop (P-BS to user) in the downlink (DL). This might not be always true of real systems where there might be dependence between the two probabilities. Also, the aggregate rate characterization from the M-BS to the user 
has not been detailed.
\par
\vspace{3px}
Other works like \cite{GoyalS} analyze an IBFD network for parameters like rate but only for a single-tier network. 
They allocate same channels to both uplink (UL) and DL of base station-to-user link and compute the 
parameters thereof. Work in \cite{Hyungjong} discusses the optimal power allocation strategy in IBFD networks using relays. 
The work builds on a cognitive setup with primary and secondary nodes in general. Interference is then controlled from primary 
transmitters to secondary receivers. The approach is modeled as an optimization problem for transmit powers of primary 
and secondary transmitters. Works in \cite{Barghi} and \cite{Hyungsik} discuss about bringing in 
MIMO and beamforming on IBFD radios and the benefits thereof, though \cite{Barghi} uses only a single tier. 
Two interesting analyses are offered through works in \cite{AggarwalV} and \cite{Riihonen} where the authors argue the use of IBFD at all. 
The authors pitch the use case of using multiple antennas for the conventional HD multiple-input-multiple-output (MIMO) 
operation versus using the antennas for IBFD operation. In fact, most of the cases discuss only the access link optimization. Another 
relevant study in self-backhauling is the recent work in \cite{singh2014tractable}. The authors present the system level coverage and 
rate results in a mesh network of base-stations (BS) with wired backhaul, providing wireless backhauling for BSs without wired backhaul. 
However, the study is done for millimeter-wave networks without IBFD capability. Previous work on similar HetNet architecture was presented in \cite{Ankit}, but was limited to a single path loss exponent being used for both the P-BS as well as the M-BS tier. This work generalizes \cite{Ankit} to two different path loss exponents which is practically more relevant.

\subsection{Our approach and novelty}
The paper proposes a two-tier network consisting of IBFD-enabled P-BSs and conventional M-BSs. It analyzes the performance of the sytem in 
the DL. The setup consists of P-BSs being wirelessly backhauled by the M-BSs. Since the P-BSs 
are IBFD-enabled, they use the same set of frequencies to backhaul themselves on the DL and UL with the M-BS, as the ones they 
use in the DL and UL access links to the users (say, $f1$ and $f2$ be the DL and UL frequency for the P-BS to user 
(and M-BS to P-BS in backhaul) link and user to P-BS (and P-BS to M-BS in backhaul) link transmissions). 
The M-BSs being conventional non-IBFD stations, need to bifurcate frequency resources between backhaul and access links. For $1$ Hz of 
bandwidth, the M-BSs use $\eta$ Hz ($0 \le \eta \le 1$) for backhauling and $(1-\eta)$ Hz for direct access links to users.
It is interesting to note that the design fits \emph{as-is} for a frequency division duplexed IBFD network and could be 
tailored to suit other networks, such as TDD as well. Moreover, the design requires only the P-BSs to be IBFD, while the user 
devices and M-BS could work on legacy FDD mode (refer Fig.~\ref{fig:Fig1_DLIntf_full-duplex}).\\
For the given two-tier network, Poisson Point Process (PPP) (\cite{Stoyan} and \cite{ganti2012stochastic}) is used for the 
spatial distribution of nodes (P-BS and M-BS). The main contributions of this work are listed below:
\begin{itemize}
	\item A novel HetNet architecture, leveraging IBFD capability is proposed and the coverage probability and average rate for a typical user in such a network are derived. 
	\item The paper achieves mathematical derivation of the exact coverage and rate parameters for the proposed IBFD HetNet. Though it is intuitive to see that spectrum reuse increases rates at the expense of decreased coverage due to wireless backhaul links, an exact quantification of these two contrasting effects has been established in this work. Tractable and quickly computable coverage expressions are important for system analysis of future IBFD-enabled HetNets. The analysis also identifies inter-tier 
	interference and the bandwidth division at the backhauling M-BS as the main limiting factors in such HetNets.
	\item In the proposed network, the effective signal-to-interference ($\SIR$) ratio distribution for a typical user associated with a P-BS is modeled as the joint $\SIR$ distribution of the $\{$user--P-BS, P-BS--M-BS$\}$ link-pair.  Therefore the coverage under P-BS implies joint coverage -- of the typical user under a P-BS, along with coverage of the same P-BS with a backhauling M-BS. The average rate for a P-BS associated user is modeled as the minimum of rates on the $\{$user--P-BS, P-BS--M-BS$\}$ link-pair. This introduces inter-dependence between the two tiers.
		\end{itemize}
In \cite{Harpreet}, the coverage probability was obtained in 
a general K-tier HetNet, but without any dependence between the tiers themselves. 
In the proposed network, since the backhaul links are also active over the wireless channel, interference to access links of users is enhanced and the coverage degrades. On the other hand, reusing the access spectrum for wireless backhauling in an IBFD setting tends to double the spectral efficiency of the system. This work models and details the way 
these two contrasting factors affect the overall system behavior.
\section{System Model}
The system model considered in this paper is described in the following sub-sections.
\subsection{Spatial arrangement of base-stations}
The location of the M-BSs and P-BSs are assumed to follow independent Poisson point processes $\Phi_m \subset \mathbb{R}^2$ and 
$\Phi_s\subset \mathbb{R}^2$ with densities $\lambda_m$ and $\lambda_s$, respectively. 
The transmit powers of the M-BS and P-BS tier are assumed to be $P_m$ and $P_s$ respectively. Small scale fading between any pair of nodes 
is assumed to be independent and Rayleigh distributed. The fading power (square of the small scale fading) between nodes located at points 
$x$ and $y$ in $\mathbb{R}^2$ is denoted by $g_{xy}$ and is exponentially distributed, also with unit mean. 
Basic large scale path loss function is used, i.e., the  power received at distance $r$ when transmitting at unity power is given as 
$r^{-\alpha}$, where $\alpha > 2$ is the path loss exponent. Path loss exponents for M-BS and P-BS tiers are denoted by $\al_m$ and $\al_s$, respectively.
Without loss of generality, a typical user located at the origin is considered and the performance of this typical user in the 
DL is analyzed.

\subsection{Association Model}
\label{subsec:Subsec_AssocModel}
The association rule is based on the maximum average received biased power as discussed in \cite{HanShin}. Biasing a user to 
associate with a P-BS even if the received power from a M-BS is higher, helps offload traffic from the M-BSs. Hence, for BS association, 
the average received biased power at the typical user is $P_sB_s \PL{x_s}{-\alpha_s}$ and $P_mB_m \PL{x_m}{-\alpha_m}$ for P-BS and M-BS respectively, 
where $B_s$ and $B_m$, and $x_s$ and $x_m$ 
represent their respective biases and distances from the typical user at the origin. Let $x_{s,min}$ and $x_{m,min}$ denote the distance of 
the closest P-BS and M-BS respectively, to the user at the origin. Then the user connects to the P-BS if $ x_{m,min} \geq  \Delta_m^{-1} x_{s,min}^{\al_s / \al_m}$ 
and to the closest M-BS, otherwise. Here $\Delta_m =( (P_s B_s)/(P_m B_m))^{1/\al_m}$. 
Let $\varepsilon_m$ and $\varepsilon_s$ denote the events of M-BS and P-BS association respectively, of the typical user.
Then the corresponding probabilities of association are given in \cite{HanShin} as,
\begin{equation}
\Pr(\varepsilon_s) = 2\pi \lambda_s \int_{0}^{\infty} e^{-\pi\left(\lambda_m \Delta_m^{-2} x_{s,min}^\frac{2\al_s}{\al_m} + \lambda_s x_{s,min}^2\right)} x_{s,min}\, \D x_{s,min}\; ;\,\,\,\,\,\,\,\,\,\,\,\, \Pr(\varepsilon_m) = 1 - \Pr(\varepsilon_s).
\label{eqn:Eqn1_PrAssoc}
\end{equation}

\subsection{Bandwidth Allocation}
\label{subsec:Subsec1_BWAlloc}
Bandwidth allocation between the P-BS and M-BS tiers is discussed next, considering $2W$ Hz of allocated spectrum. 
\subsubsection*{Full-Duplex Bandwidth Allocation}
 For IBFD networks, the available spectrum of $2W$ Hz is allocated as: 
 \begin{enumerate}
  \item The entire $2W$ Hz is used by P-BSs and M-BSs.
  \item Within each tier, $2W$ Hz is divided into UL and DL resources utilizing $W$ Hz each (as for conventional FDD). 
  \item At the M-BSs (being non-IBFD), $W$ Hz is further sub-divided as $\eta W$ Hz and $(1-\eta)W$ Hz, $0 \leq \eta \le 1$, 
 for backhaul and access resources respectively.
 \item Also, each M-BS to P-BS link is limited in bandwidth to $\left(\frac{\eta}{n}\right)W$ Hz, considering each M-BS backhauls 
 $n = \lambda_s/\lambda_m$ P-BSs on an average.
 \end{enumerate} 
 \subsubsection*{Half-Duplex Bandwidth Allocation}
 For conventional FDD networks, the available $2W$ Hz is allocated as: 
 \begin{enumerate}  
 \item $\kappa\,2W$ Hz and $(1-\kappa)(2W)$ Hz, $0 \leq \kappa < 1$, partitioned between M-BSs and P-BSs respectively. Typically, 
 $\kappa = 0.5$, so each tier gets $W$ Hz. Notice that this is in contrast to both the tiers getting the entire $2W$ Hz in IBFD case.
 \item At each tier, $W$ Hz is divided into UL and DL resources utilizing $W/2$ Hz each.  
 \item At the M-BSs, $W/2$ Hz is further sub-divided as $\eta W/2$ Hz and $(1-\eta)W/2$ Hz, $0 \leq \eta \le 1$, 
 for backhauling and access resources respectively.
 \item Also, each M-BS to P-BS link is limited in bandwidth to $\left(\frac{\eta}{n}\right)W/2$ Hz, considering each M-BS backhauls 
 $n = \lambda_s/\lambda_m$ P-BSs on an average.
 \end{enumerate}
 \begin{subfigures}
 \begin{figure}[!ht]
 \centering
 \includegraphics[scale = 0.25]{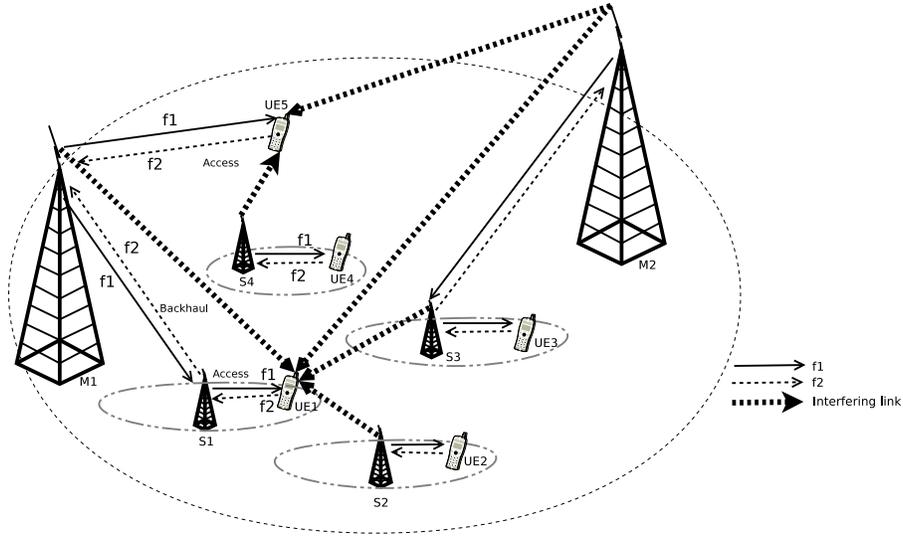}
 \captionsetup{font=footnotesize}
 \caption{DL interference in IBFD system. Total spectrum = $2W$ Hz. Each link represents a bandwidth of $W$ Hz. For instance, the DL backhaul link is 
 centered around $f1$ Hz and has a bandwidth of $W$ Hz. Users attached to either P-BS or M-BS receive interference from 
 both the tiers. The given spectrum though, is entirely used by both the tiers.}
 \label{fig:Fig1_DLIntf_full-duplex}
\end{figure}
 \begin{figure}[ht!]
 \centering
 \includegraphics[scale = 0.25]{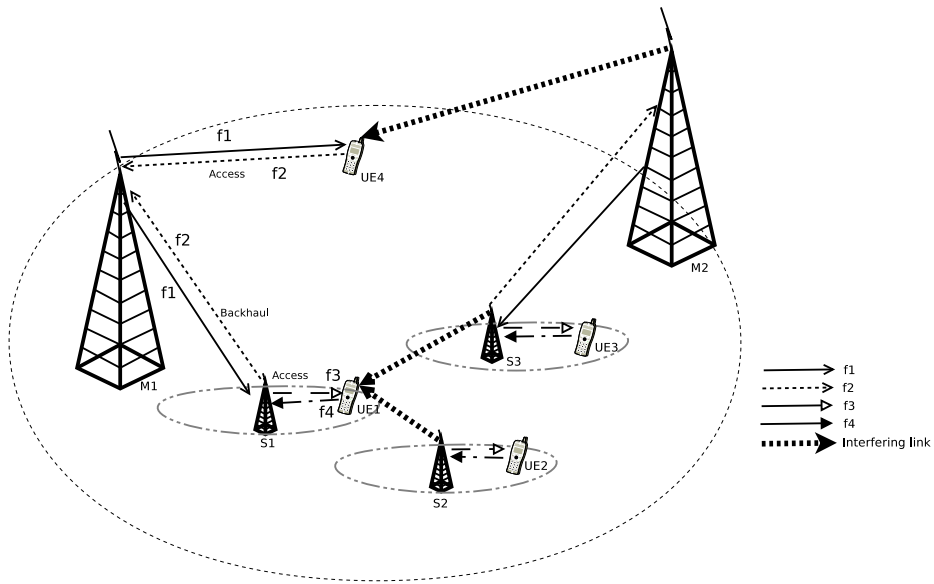}
 \captionsetup{font=footnotesize}
 \caption{DL interference in conventional FDD system. Total spectrum = $2W$ Hz. Each link represents a bandwidth of $W/2$ Hz. For instance, the DL backhaul link is 
 centered around $f1$ Hz and has a bandwidth of $W/2$ Hz. Users attached to a tier (P-BS or M-BS) receive interference from 
 only from that tier. However, the given spectrum needs to be partitioned between the tiers.}
 \label{fig:Fig2_DLIntf_half-duplex}
\end{figure}
\end{subfigures}
Taking the case of an IBFD system, the proposed frequency allocation plan is depicted in Fig.~\ref{fig:Fig1_DLIntf_full-duplex}. 
For conventional FDD system the frequency plan is well known and depicted in Fig.~\ref{fig:Fig2_DLIntf_half-duplex}. The figures denote DL 
and UL carriers as $f1$ and $f2$ respectively, that are centered about the bandwidth of $W$ Hz and $W/2$ Hz in IBFD and conventional FDD 
case respectively. In IBFD systems, both P-BSs and M-BSs use the total available spectrum but interference is more, as shown by the thick 
broken lines in Fig.~\ref{fig:Fig1_DLIntf_full-duplex}. For conventional FDD systems, though the interferers are only the nodes belonging 
to the tier to which the user is associated, the total available spectrum is partitioned between the M-BS and P-BS.
\subsection{Signal-to-Interference Ratio}
An interference limited network is assumed and signal-to-interference-plus-noise ratio ($\SINR$) is replaced 
$\SIR$ \cite{Harpreet} as the measure of performance.
\subsubsection{Small cell association}
Consider a typical user at the origin associated with a P-BS. Let point $r_s\in \Phi_s$ denote this closest P-BS to the typical user. 
Let the point $r_m \in \Phi_m$ denote the closest M-BS to the aforementioned P-BS. The P-BS associates with the closest M-BS for backhaul.  Let $\SIR_{us}$ denote $\SIR$ of the signal from the P-BS to the user in DL access. Then 
\begin{align}
\SIR_{us}(r_s,r_m) = \frac{P_sg_{o r_s}\|r_s\|^{-\alpha_s}}{I_s(o)+I_m(o)+P_m g_{o r_m}\|r_m\|^{-\alpha_m}  }
\label{eqn:Eqn2_SIRus}
\end{align}
where,
\[I_s(x) = \sum_{z\in\Phi_s\cap B(o,r_s)^c}P_sg_{x z}\|z-x\|^{-\alpha_s}, \]
is the interference from other P-BSs to a user located at a point $x$ in $\mathbb{R}^2$ and $B(o,r_s)$ denotes a disc centered at origin $o$, having radius $r_s$ and  $B(o,r_s)^c$ denotes its complement. 
The interference from the M-BS to a user located at a point $x$ in $\mathbb{R}^2$ is 
\[I_m(x) = \sum_{z\in\Phi_m\cap \mathcal{M}}P_mg_{x z}\|z-x\|^{-\alpha_m},\]
where $ \mathcal{M} =( B(o,r_s^{\al_s / \al_m}\Delta_m^{-1})\cup B(r_s,\|r_m - r_s\|))^c$, and the discs are assumed to be open sets.  The $\SIR$ of the signal from the M-BS to the P-BS in DL backhaul is then 
given as,
\begin{align}
\SIR_{sm}(r_s,r_m) =  \frac{P_mg_{r_s r_m}\|r_m-r_s\|^{-\alpha_m}}{I_s(r_s)+I_m(r_s)+\beta P_m  },
\label{eqn:Eqn3_SIRsm}
\end{align}
where the residual self-interference generated by the P-BS, being IBFD, is modeled as $\beta P_m$, 
$\beta$ being a factor controlling the amount of self-interference. Though 
the self-interference channel in some of the recent literature (\cite{Duarte,Marios}) has been modeled as a Rician fading channel 
\cite{simon2005digital}, this paper focuses on a simpler model. The idea is to get a handle on network coverage and rates given 
a self-interference suppressing IBFD radio, than to quantify the self-interference suppression capability of an IBFD radio.
\subsubsection{Macro cell association} Assume that the typical user at the origin is associated to an M-BS denoted by point at $r_m' \in 
\Phi_m$. Let $\mathtt{SIR}_{um}$ denote the $\SIR$ of the signal from the M-BS to the user in DL access and is given by
\begin{align}
\mathtt{SIR}_{um}(r_m') =  \frac{P_mg_{or_m'}\|r_m'\|^{-\alpha_m}}{\hat{I}_s(o) +\hat{I}_m(o)  }. 
\label{eqn:Eqn4_SIRum}
\end{align}
where $\hat{I}_s(o)= \sum_{z\in\Phi_s\cap B(o,\Delta_m r_m'^{\al_m / \al_s})^c}P_sg_{o z}\|z\|^{-\alpha_s}$ and 
$\hat{I}_m(o)=  \sum_{z\in\Phi_m\cap B(o,r_m')^c}P_mg_{o z}\|z\|^{-\alpha_m}$.
The next section analyzes coverage probability of a typical user in the given network.
\section{Coverage}
Coverage  probability is defined as the probability that a randomly chosen user in the given network achieves an $\SIR$ greater than a given threshold. Let $T_s,\,T_b \text{ and } T_m$ be the $\SIR$ coverage thresholds for user to P-BS, P-BS to M-BS and user to M-BS links respectively.
In the proposed setup, the effective coverage for a P-BS associated user would depend jointly on user to P-BS and P-BS to M-BS coverage probabilities denoted as \p{u,s}{}($T_s$), 
\p{s,m}{}($T_b$). For an M-BS associated user, coverage would only depend on the user to M-BS coverage probability denoted 
as \p{u,m}{}($T_m$). Using \eqref{eqn:Eqn1_PrAssoc}, the effective coverage probability for a user, \p{u}{x}($T_s,T_b,T_m$), can now be defined as
\begin{equation}
 P_u^{x}(T_s,T_b,T_m) = \Pr(\varepsilon_s) \cdot \Pr\left(\SIR_{us} > T_s,\, \SIR_{sm} > T_b \mid \varepsilon_s\right) + \Pr(\varepsilon_m) \cdot \Pr\left(\SIR_{um} > T_m \mid \varepsilon_m\right),
 \label{eqn:Eqn5_Pcu}
\end{equation}
where $\varepsilon_m \text{ and } \varepsilon_s$ denote events of M-BS and P-BS association and $x \in \{f,h\}$ denoting IBFD (full-duplex) or 
conventional FDD (half-duplex) operation.  
The joint distribution of  $r_m$ and $r_s$, that will be used in the evaluation of coverage probability is discussed next.

\subsection{Joint probability density function of distance pair $(r_s,\,r_m)$}
\label{subsec:Subsec_JtPdf}
As mentioned above, the coverage under P-BS association implies a joint coverage probability over user to P-BS and P-BS to its 
backhauling M-BS links. This entails deriving a joint probability density function (pdf) of the distance pair $(r_s,\,r_m)$ 
with respect to a typical user at the origin. When the user associates with a P-BS, the joint pdf $f(r_s,r_m)$ is derived for a  general $\Delta_m$, that is, $\Delta_m \ge 1$ (typical, P-BS biased association) and $0 < \Delta_m < 1$ (negative P-BS bias). In Fig.~\ref{fig:Fig4_PcSc_SysArrangements} and Fig.~\ref{fig:Fig5_PcSc_SysArrangements_DeltaLess1}, the possible spatial configurations of the user, P-BS and M-BS are shown that occur because of various possible relative locations of the user associated P-BS and P-BS associated M-BS, with respect to the typical user at the origin.
\begin{subfigures}
\begin{figure}[!ht]
\centering
 \includegraphics[scale = 0.27]{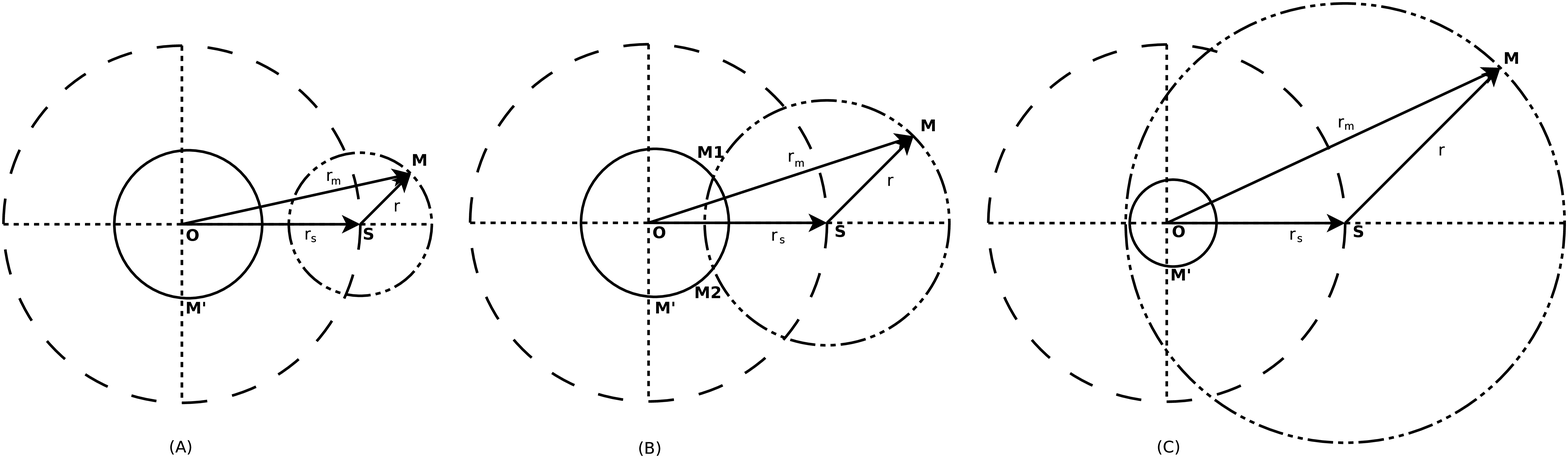}
 \captionsetup{font=footnotesize}
 \caption{Network geometry for event $\varepsilon_s$ With $\Delta_m \ge 1$, i.e., user biased towards P-BS tier. The three possible scenarios are as a result of different spatial locations of the P-BS and M-BS with respect to the typical user at the origin.  When the user associates with a P-BS, coverage depends  jointly on user to P-BS (for access) link and P-BS to M-BS (for backhaul) link. 
 Given a P-BS $S$, found at distance $r_s$ from $O$, the nearest M-BS to the user could be at a distance $\Delta_m^{-1}r_s^{\al_s / \al_m}$ from 
 the origin $O$, denoted by $OM'$. 
 The backhauling M-BS $M$ could be found anywhere at a distance $r_m$ from $O$, resulting in three different network geometries ($(A)\,,(B)\text{ and } (C)$) that define the joint density function 
 of the P-BS and M-BS with respect to the typical user.}
 \label{fig:Fig4_PcSc_SysArrangements}
\end{figure}
Instead of deriving the joint distribution for $(r_s,r_m)$, an equivalent distribution of $(r_s, r)$ is derived. This is because of the occurrence of the term $\PL{r_m-r_s}{-\al}$ in the $\SIR_{sm}(r_s,r_m)$ expression of equation~\ref{eqn:Eqn3_SIRsm}. Replacing it with an equivalent $\PL{r}{-\al}$ simplifies the derivation of coverage expressions and so the joint distribution on $(r_s,r)$ is used.
\begin{lem}
 The joint density function of the access-backhaul distance pair, $(r_s, r)$, with respect to the typical user, given the bias factor $\Delta_m \geq 1$ is 
 
 \begin{equation}
    f(r_s,r)=
    \begin{cases}
    \vspace{10px}
      \displaystyle\frac{4 e^{-\pi\left(r^2\lambda_m + \frac{r_s^{\frac{2\al_s}{\al_m}}}{\Delta_m^2}\lambda_m + r_s^2 \lambda_s \right)} \pi^2 r \lambda_m \left( \frac{r_s^{\frac{2\al_s}{\al_m}}}{\Delta_m^2} \al_s \lambda_m + r_s^2 \al_m \lambda_s\right)} {r_s \al_m}, & 0 < \|r\| \le \nu_-(r_s,\Delta_m,\al_s,\al_m) \\
      \vspace{10px}
     \displaystyle\frac{\partial\left( e^{-\lambda_s \pi r_s^2}\, e^{-\lambda_m \left(\pi(\Delta_m^{-1}r_s^{\al_s / \al_m})^{2} + \pi r^2 - \,\operatorname{lens}(M_1,\, M_2) \right)}\right)}{\partial r_s \partial r}, & \|r\| \in \nu_-^+(r_s,\Delta_m,\al_s,\al_m)\\
   4 \pi ^2 \lambda_m \lambda_s r\, r_s e^{-\pi  \left(\lambda_m r^2+\lambda_s r_s^2\right)}, & \|r\| \ge \nu_+(r_s,\Delta_m,\al_s,\al_m), 
    \end{cases}
    \label{eqn:Eqn_pdf_DeltaGrThan1}
  \end{equation}
  where \begin{itemize}
  	\item $ \nu_-(r_s,\Delta_m,\al_s,\al_m) \triangleq \, \|r_s\| - \Delta_m^{-1}\|r_s\|^{\al_s / \al_m}$
  	\item $ \nu_+(r_s,\Delta_m,\al_s,\al_m) \triangleq \, \|r_s\| + \Delta_m^{-1}\|r_s\|^{\al_s / \al_m}$
  	\item $ \nu_-^+(r_s,\Delta_m,\al_s,\al_m) \triangleq \, \displaystyle\left] \|r_s\| - \Delta_m^{-1}\|r_s\|^{\al_s / \al_m},\, \|r_s\| + \Delta_m^{-1}\|r_s\|^{\al_s / \al_m} \right]$
  	\item $\operatorname{lens}(M_1,M_2)$ denotes the area of the lens formed between the points $M_1$ and $M_2$ in Case (B) of 
  	Fig.~\ref{fig:Fig4_PcSc_SysArrangements}
  \end{itemize}
\end{lem}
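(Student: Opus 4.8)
\noindent\emph{Proof sketch.}
The plan is to compute $f(r_s,r)$ directly from the two independent Poisson point processes, reading it off a joint void probability. Write $O$ for the typical user, $S\in\Phi_s$ for the serving P-BS at distance $r_s=\|r_s\|$ from $O$, $M\in\Phi_m$ for its backhauling M-BS, and $r=\|r_m-r_s\|$ for the $S$--$M$ distance; set $d:=\Delta_m^{-1}r_s^{\al_s/\al_m}$, the exclusion radius around $O$ that the association rule of Section~\ref{subsec:Subsec_AssocModel} imposes once $x_{s,\min}=r_s$. Since $\Phi_s$ and $\Phi_m$ are independent and every quantity involved is rotation invariant about $O$, the joint law of $(r_s,r)$ does not depend on the direction of $S$; this is exactly the observation that lets one substitute $\PL{r}{-\al}$ for $\PL{r_m-r_s}{-\al}$ and parametrize by $(r_s,r)$. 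First I would condition on the event $\{\text{nearest P-BS at distance }r_s\}$, which has density $2\pi\lambda_s r_s e^{-\pi\lambda_s r_s^{2}}$ and, by independence, leaves $\Phi_m$ a homogeneous PPP of intensity $\lambda_m$. On this event $\varepsilon_s$ is precisely $\{\Phi_m\cap B(O,d)=\emptyset\}$, while ``$M$ lies at distance $r$ from $S$'' asks for a point of $\Phi_m$ on $\partial B(S,r)$ and none in the open disc $B(S,r)$.

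Next I would assemble the density. The thin annulus $\partial B(S,r)\setminus B(O,d)$ is disjoint from $B(O,d)\cup B(S,r)$, so the void probability of a PPP together with the independence of its counts on disjoint sets gives, for the conditional law of $r$,
\[
\Pr\bigl(\varepsilon_s,\ \operatorname{dist}(S,\Phi_m)\in[r,r+\D r]\mid r_s\bigr)=\lambda_m\,\ell(r)\,e^{-\lambda_m|B(O,d)\cup B(S,r)|}\,\D r,
\]
where $\ell(r)$ is the length of the arc of $\partial B(S,r)$ lying outside $B(O,d)$, which equals $\partial_r|B(O,d)\cup B(S,r)|$ by a swept-area argument. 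Multiplying by the P-BS density $2\pi\lambda_s r_s e^{-\pi\lambda_s r_s^{2}}$ and writing $|B(O,d)\cup B(S,r)|=\pi d^{2}+\pi r^{2}-|B(O,d)\cap B(S,r)|$ gives $f(r_s,r)$, so that everything reduces to computing the intersection area. That geometry --- governed by how $\|OS\|=r_s$ compares with $d\pm r$ --- falls into the three configurations of Fig.~\ref{fig:Fig4_PcSc_SysArrangements}: for $0<r\le\nu_-=r_s-d$ the discs are interior-disjoint, the intersection is $0$ and $\ell(r)=2\pi r$, and elementary differentiation of $e^{-\pi\lambda_s r_s^{2}}e^{-\pi\lambda_m(d^{2}+r^{2})}$ with $d^{2}=\Delta_m^{-2}r_s^{2\al_s/\al_m}$ produces the first branch; for $r\ge\nu_+=r_s+d$ one has $B(O,d)\subseteq B(S,r)$, the intersection is $\pi d^{2}$, the exclusion disc cancels, and the density collapses to $4\pi^{2}\lambda_m\lambda_s\,r\,r_s\,e^{-\pi(\lambda_m r^{2}+\lambda_s r_s^{2})}$, the third branch; and for $r$ in $\nu_-^+$ the discs overlap in the lens cut off by the chord through their two intersection points $M_1,M_2$, which is the middle branch with $\operatorname{lens}(M_1,M_2)$ kept symbolic.

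The main obstacle will be the overlapping regime (B). There I would have to (i) locate $M_1,M_2$ as the intersections of $\partial B(O,d)$ and $\partial B(S,r)$ and write $\operatorname{lens}(M_1,M_2)$ as a sum of two circular segments in $d,r,r_s$, being careful that whether $O\in B(S,r)$ (equivalently, $r\gtrless r_s$) changes which segment enters; and (ii) differentiate, for which the clean device is the swept-area identity $\partial_r\operatorname{lens}=|\partial B(S,r)\cap B(O,d)|$ together with the analogous chord expression for $\partial_{r_s}\operatorname{lens}$, so that $\partial_r|B(O,d)\cup B(S,r)|$ collapses to $\ell(r)$ as above and only the residual second derivative remains genuinely computational; I would then check that the three branches agree at $r=\nu_-$ and $r=\nu_+$. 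Finally I would watch the degeneracies already permitted by $\Delta_m\ge1$ once $\al_s\neq\al_m$: $\nu_-$ can be negative, making regime (A) vacuous, and $d$ can exceed $r_s$, so that within regime (B) one may have $B(S,r)\subseteq B(O,d)$, where $|B(O,d)\cup B(S,r)|=\pi d^{2}$ is $r$-independent and $f\equiv 0$ --- both are absorbed by the piecewise statement, but they are the easy places to drop a sign or mis-state a range. As a closing sanity check I would confirm that $\int_0^{\infty} f(r_s,r)\,\D r$ returns $2\pi\lambda_s r_s e^{-\pi\lambda_s r_s^{2}}e^{-\pi\lambda_m d^{2}}$, the density that the nearest P-BS sits at $r_s$ and no M-BS lies in $B(O,d)$.
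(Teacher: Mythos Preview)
Your conditioning construction and the paper's mixed-partial recipe are \emph{not} the same object, and in Case~(A) they disagree. The paper's proof simply records $f(r_s,r)=\partial^2 F/\partial r_s\,\partial r$ with
\[
F(r_s,r)=e^{-\pi\lambda_s r_s^2}\,e^{-\lambda_m\,|B(O,d)\cup B(S,r)|},\qquad d=d(r_s)=\Delta_m^{-1}r_s^{\al_s/\al_m},
\]
and evaluates it in each regime. Your route---multiply the P-BS nearest-neighbour density by the conditional density $\lambda_m\,\ell(r)\,e^{-\lambda_m|B(O,d)\cup B(S,r)|}$---yields in Case~(A) (where $\ell(r)=2\pi r$ and $|B(O,d)\cup B(S,r)|=\pi d^2+\pi r^2$)
\[
4\pi^2\lambda_s\lambda_m\,r_s\,r\,e^{-\pi\lambda_s r_s^2-\pi\lambda_m(d^2+r^2)} ,
\]
whereas the first branch of the lemma is
\[
\frac{4\pi^2\lambda_m r}{r_s\,\al_m}\Bigl(\lambda_s r_s^{2}\al_m+\lambda_m\,\Delta_m^{-2}r_s^{2\al_s/\al_m}\al_s\Bigr)e^{-\pi\lambda_s r_s^2-\pi\lambda_m(d^2+r^2)} .
\]
These differ by the factor $1+\dfrac{\lambda_m\al_s d^{2}}{\lambda_s\al_m r_s^{2}}$, which is exactly the piece that the $\partial_{r_s}$ in the paper's mixed partial picks up from the $r_s$-dependence of the exclusion radius $d(r_s)$. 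Your sentence ``elementary differentiation of $e^{-\pi\lambda_s r_s^{2}}e^{-\pi\lambda_m(d^{2}+r^{2})}$ \ldots\ produces the first branch'' is correct as a computation of $\partial^2_{r_s,r}F$, but it does \emph{not} equal the product you just wrote down; you have silently switched arguments. The same mismatch propagates to Case~(B), since the union area still carries $d(r_s)$; only in Case~(C), where $|B(O,d)\cup B(S,r)|=\pi r^{2}$ no longer depends on $r_s$, do the two formulas coincide---which is why your third branch comes out right.

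So the gap is this: your swept-area/conditioning computation produces what one would call the genuine joint density of $(R_s,R)$ on $\varepsilon_s$, but that is not the quantity the lemma states. To prove the lemma as written you must do what the paper does---take $\partial^2_{r_s,r}$ of the void probability $F$ with $d=d(r_s)$ held inside---and your arc-length identity $\partial_r|B(O,d)\cup B(S,r)|=\ell(r)$ only handles the $r$-derivative; the $r_s$-derivative also hits $d(r_s)$ (and, in Case~(B), the lens area through $d$). Either carry that extra term through explicitly, or acknowledge that your product formula and the stated $f(r_s,r)$ are different and explain which one is intended. Your sanity check in fact flags this: integrating the lemma's Case~(A) expression over $r$ returns $-\partial_{r_s}\bigl[e^{-\pi\lambda_s r_s^{2}-\pi\lambda_m d^{2}}\bigr]$, not $2\pi\lambda_s r_s e^{-\pi\lambda_s r_s^{2}}e^{-\pi\lambda_m d^{2}}$.
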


\begin{proof}
See Appendix~\ref{app:App1_AB_PDF}
\end{proof}
\begin{figure}[!ht]
\centering
 \includegraphics[scale = 0.27]{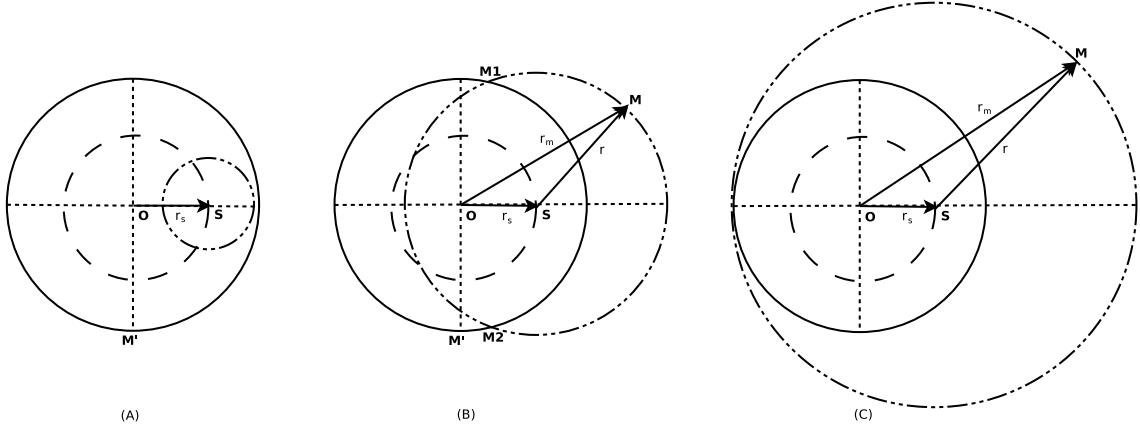}
 \captionsetup{font=footnotesize}
 \caption{Network Geometry for Event $\varepsilon_s$ With $0 < \Delta_m < 1$. 
 The figure is similar to Fig.~\ref{fig:Fig4_PcSc_SysArrangements}, except that the user is biased towards the M-BS tier. 
 Given the user associated P-BS is at a distance $r_s$ from $O$, the nearest M-BS could only be at a distance 
 $\ge$ $\Delta_m^{-1}r_s^{\al_s / \al_m}$.}
 \label{fig:Fig5_PcSc_SysArrangements_DeltaLess1}
\end{figure}
\end{subfigures}
\begin{lem}
 The joint density function of the access-backhaul distance pair, $(r_s, r)$, with respect to the typical user, given the bias factor $\Delta_m < 1$ is 
 
 \begin{equation}
    f(r_s,r)=
    \begin{cases}
    \vspace{10px}
    0, & 0 < \|r\| \le \mu_-(r_s,\Delta_m,\al_s,\al_m) \\
    \vspace{10px}
    \displaystyle\frac{\partial\left( e^{-\lambda_s \pi r_s^2}\, e^{-\lambda_m \left(\pi(\Delta_m^{-1}r_s^{\al_s / \al_m})^{2} + \pi r^2 - \,\operatorname{lens}(M_1,\, M_2) \right)}\right)}{\partial r_s \partial r}, & \|r\| \in \mu_-^+(r_s,\Delta_m,\al_s,\al_m)\\
    4 \pi ^2 \lambda_m \lambda_s r\, r_s e^{-\pi  \left(\lambda_m r^2+\lambda_s r_s^2\right)}, & \|r\| \ge \mu_+(r_s,\Delta_m,\al_s,\al_m), 
    \end{cases}
    \label{eqn:Eqn_pdf_DeltaLeThan1}
  \end{equation}
  where \begin{itemize}
  	\item $ \mu_-(r_s,\Delta_m,\al_s,\al_m) \triangleq \, -\|r_s\| + \Delta_m^{-1}\|r_s\|^{\al_s / \al_m}$
  	\item $ \mu_+(r_s,\Delta_m,\al_s,\al_m) \triangleq \, \|r_s\| + \Delta_m^{-1}\|r_s\|^{\al_s / \al_m}$
  	\item $ \mu_-^+(r_s,\Delta_m,\al_s,\al_m) \triangleq \, \displaystyle\left] -\|r_s\| + \Delta_m^{-1}\|r_s\|^{\al_s / \al_m},\, \|r_s\| + \Delta_m^{-1}\|r_s\|^{\al_s / \al_m} \right]$
  	\item $\operatorname{lens}(M_1,M_2)$ denotes the area of the lens formed between the points $M_1$ and $M_2$ in Case (B) of 
  	Fig.~\ref{fig:Fig5_PcSc_SysArrangements_DeltaLess1}
  \end{itemize}
\end{lem}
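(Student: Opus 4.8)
The plan is to run the same argument as in the proof of the preceding lemma (Appendix~\ref{app:App1_AB_PDF}); the only essentially new ingredient is the change in the short-range geometry once $\Delta_m$ falls below $1$. First I would fix the serving P-BS: since every P-BS has the same transmit power and bias, the serving P-BS is the nearest one, so conditioning on it lying at distance $\|r_s\|$ from the origin contributes the nearest-point density $2\pi\lambda_s\|r_s\|\,e^{-\pi\lambda_s\|r_s\|^{2}}$ of $\Phi_s$, and by rotational invariance its angular position may be fixed on a reference ray. Next I would turn the two remaining requirements into emptiness events for the \emph{independent} process $\Phi_m$: by the association rule of Section~\ref{subsec:Subsec_AssocModel}, $\varepsilon_s$ is precisely ``$\Phi_m$ has no point inside $B(o,\Delta_m^{-1}\|r_s\|^{\al_s/\al_m})$'', while ``the serving P-BS backhauls to an M-BS at distance exceeding $r$'' is ``$\Phi_m$ has no point inside $B(r_s,r)$''. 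Hence, conditionally on the P-BS being at distance $\|r_s\|$, the probability that $\varepsilon_s$ holds and the P-BS's nearest M-BS lies beyond $r$ equals the $\Phi_m$-void probability of the union $B(o,\Delta_m^{-1}\|r_s\|^{\al_s/\al_m})\cup B(r_s,r)$; multiplying by the nearest-P-BS density and taking the negative $r$-derivative produces $f(r_s,r)$.

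All that then remains is the geometry of the union of the two discs $B(o,\Delta_m^{-1}\|r_s\|^{\al_s/\al_m})$ and $B(r_s,r)$, whose centres are $\|r_s\|$ apart; there are three regimes. Since $\Delta_m<1$ gives $\Delta_m^{-1}>1$, for the typical range of $r_s$ the serving P-BS lies \emph{inside} the M-BS-exclusion disc $B(o,\Delta_m^{-1}\|r_s\|^{\al_s/\al_m})$ --- this is exactly where the argument departs from the preceding lemma, in which the P-BS sits outside that disc. Consequently, when $r\le\mu_-=\Delta_m^{-1}\|r_s\|^{\al_s/\al_m}-\|r_s\|$ the disc $B(r_s,r)$ is contained in the exclusion disc, so a backhaul M-BS at distance $r$ from the P-BS would have to lie in a region already forbidden by $\varepsilon_s$; the event is impossible and the density vanishes, which is the first branch. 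When $\mu_-<r\le\mu_+=\Delta_m^{-1}\|r_s\|^{\al_s/\al_m}+\|r_s\|$ the two boundary circles cross at the two points $M_1,M_2$ of Fig.~\ref{fig:Fig5_PcSc_SysArrangements_DeltaLess1}, the area of the union is $\pi(\Delta_m^{-1}\|r_s\|^{\al_s/\al_m})^{2}+\pi r^{2}-\operatorname{lens}(M_1,M_2)$ with $\operatorname{lens}(M_1,M_2)$ the standard two-circle intersection area, and substituting this into the void probability and differentiating gives the middle branch. Finally, when $r\ge\mu_+$ the exclusion disc is swallowed by $B(r_s,r)$, the union area is just $\pi r^{2}$, $\varepsilon_s$ holds automatically, and the density collapses to the product of the nearest-P-BS and nearest-M-BS densities, $4\pi^{2}\lambda_m\lambda_s\,r\,r_s\,e^{-\pi(\lambda_m r^{2}+\lambda_s r_s^{2})}$, the same third branch as before.

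I expect the main obstacle to be the geometric bookkeeping of the middle regime: writing $\operatorname{lens}(M_1,M_2)$ explicitly in terms of $r_s$ and $r$ through the circle--circle intersection formula and then differentiating the resulting void probability correctly, keeping in mind that the regime endpoints $\mu_\pm$ themselves depend on $r_s$ and that one must check continuity of the density across $r=\mu_-$ and $r=\mu_+$. A secondary subtlety is the degenerate case $\Delta_m^{-1}\|r_s\|^{\al_s/\al_m}<\|r_s\|$, which can still occur for small $r_s$ when $\al_s>\al_m$ even though $\Delta_m<1$; there $\mu_-<0$, the first branch is vacuous, and one enters the middle branch directly from $r=0$, and this should be verified to be consistent with the stated piecewise form. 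As a global check, integrating $f(r_s,r)$ over $r>0$ should return $2\pi\lambda_s\|r_s\|\,e^{-\pi\lambda_s\|r_s\|^{2}}\,e^{-\pi\lambda_m(\Delta_m^{-1}\|r_s\|^{\al_s/\al_m})^{2}}$, whose integral over $\|r_s\|$ in turn recovers $\Pr(\varepsilon_s)$ in \eqref{eqn:Eqn1_PrAssoc}.
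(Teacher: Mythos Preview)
Your proposal is correct and follows essentially the same route as the paper's own argument in Appendix~\ref{app:App1_AB_PDF}: reduce the joint law of $(r_s,r)$ to void probabilities of the independent processes $\Phi_s$ and $\Phi_m$ over $B(o,r_s)$ and $B(o,\Delta_m^{-1}r_s^{\al_s/\al_m})\cup B(r_s,r)$, then split into the three geometric regimes (containment, partial overlap, engulfment) according to the relative sizes of the two M-BS discs. The paper's treatment of the $\Delta_m<1$ case is in fact terser than yours---it simply notes that Case~(A) is impossible because the backhaul disc lies inside the exclusion disc, and that Cases~(B) and~(C) reuse \eqref{eqn:Eqn13_density2} and \eqref{eqn:Eqn14_density3} verbatim---so your additional remarks on the degenerate small-$r_s$ regime and the consistency check against $\Pr(\varepsilon_s)$ go beyond what the paper supplies.
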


\begin{proof}
See Appendix~\ref{app:App1_AB_PDF}
\end{proof}
\subsection{Small Cell Coverage in IBFD}
\label{subsec:Subsec_SC_Covg}
In this section, the coverage probability of a typical user under P-BS is derived. Coverage under P-BS is denoted as 
\p{u,s}{f}($T_s,T_b$) and the corresponding geometry of the node locations is depicted in Fig.~\ref{fig:Fig3_PcSc}.

\begin{figure}[ht!]
\centering
 \includegraphics[width = 6cm]{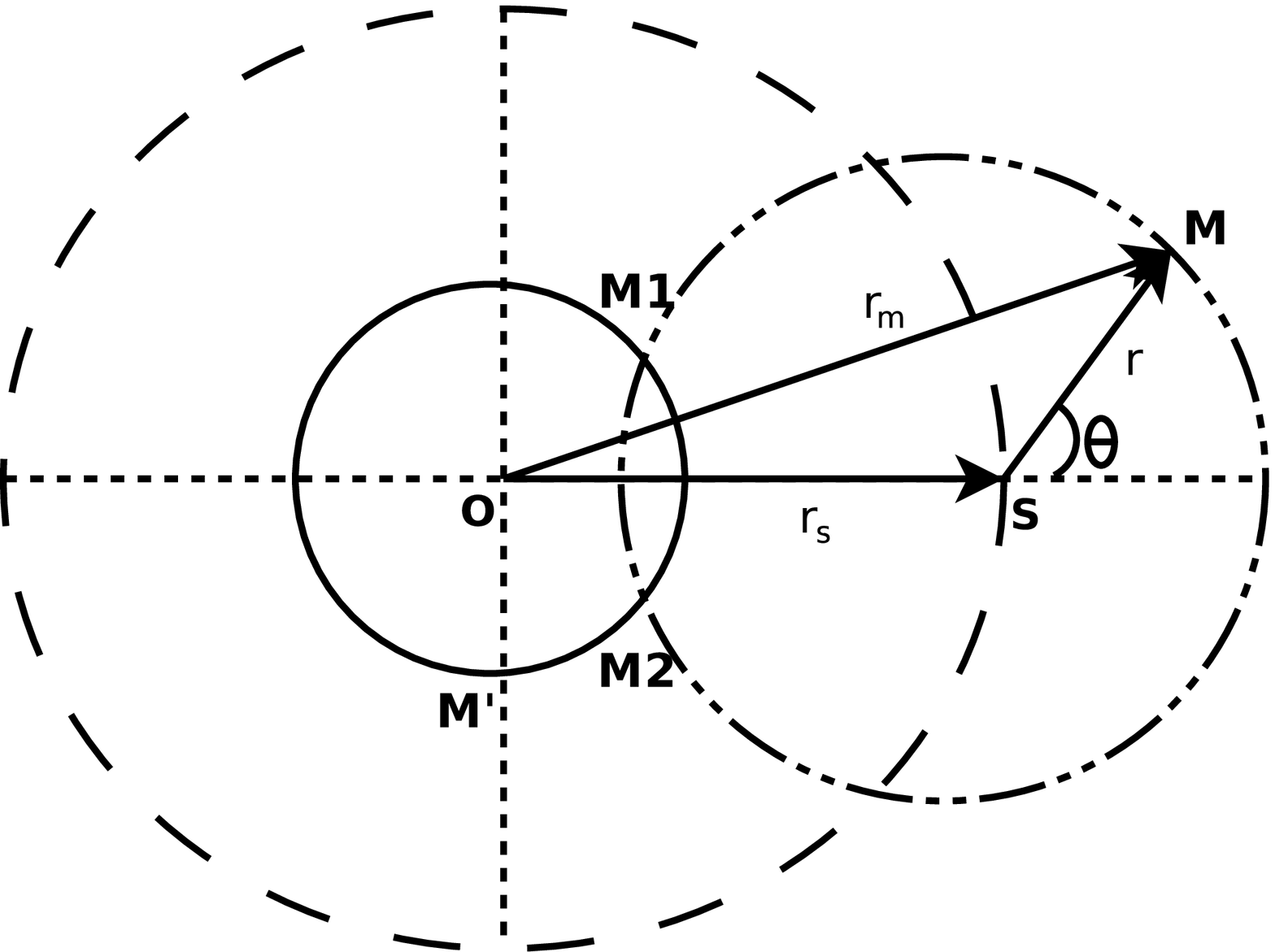}
 \captionsetup{justification=centering}
 \caption{An instance of user associating with P-BS}
 \label{fig:Fig3_PcSc}
\end{figure}
A typical user located at the origin $o$ associates with a P-BS ($S$ in Fig.~\ref{fig:Fig3_PcSc}) at a distance $r_s$. 
From \eqref{eqn:Eqn1_PrAssoc}, it follows that there is no M-BS inside a ball of radius $OM' = \Delta_m^{-1} r_s^{\al_s / \al_m}$ centered at the origin $o$. 
For the backhaul, the P-BS $S$ connects to the nearest M-BS ($M$ in Fig.~\ref{fig:Fig3_PcSc}) at a distance $r_m$ from $o$. The backhaul 
distance from the P-BS $S$ to the backhauling M-BS $M$ is $r$. In IBFD mode, the user when associated with the P-BS, will receive 
interference from other P-BSs as well as all the M-BSs.
Let  $g(s,x,\alpha)$ be defined as
\[ g(s,x,\alpha) = \frac{1}{1+s\PL{x}{-\alpha}} .\]

\begin{lem}
The probability of coverage for a user associated with a P-BS in the given two-tier IBFD network is 
\begin{equation}
\begin{aligned}
	&P_{u,s}^{f}(T_s,T_b) =\\
&\int\limits_0^{2 \pi}\int\limits_{r_s>0, r>0} e^{-\lambda_s \int\limits_{\Phi_s \cap A_1^c} 1 - g(s_1, z, \alpha_s) g(s_2, z-r_s, \alpha_s)\D{z} -\lambda_m \int\limits_{\Phi_m \cap A_2^c} 1 - g(s_1^{\prime}, v, \alpha_m) g(s_2^{\prime}, v-r_s, \alpha_m)\D{v} -\beta s_2} g(s_1^{\prime}, r_m, \alpha_m)f(r_s,r) \D{r_s}\D{r}\, \D{\theta},
\label{eqn:Eqn6_Pcus_FD}
\end{aligned}
\end{equation}
where, $A_1 = B(o,r_s)$, $s_1 = T_s\|r_s\|^{\alpha_s}$, $s_2 = \frac{T_b}{P_m}\|r\|^{\alpha_m}P_s$, $r_m = \sqrt{r_s^2 + r^2 + 2 r_s r \cos{\theta}}$,  $A_2 = ( B(o,\Delta_m^{-1} r_s^{\al_s / \al_m})\cup B(r_s,\|r\|))$, $s_1^{\prime} = \frac{T_s}{P_s}\|r_s\|^{\alpha_s}P_m$, and   $s_2^{\prime} = T_b \|r\|^{\alpha_m}$.
\label{lem:Lemma1}
\end{lem}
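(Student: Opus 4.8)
The plan is to fix the P--BS association event $\varepsilon_s$ and condition on the geometry triple $(r_s, r, \theta)$, so that the backhaul--server distance $r_m = \sqrt{r_s^2 + r^2 + 2 r_s r \cos\theta}$ is determined; to evaluate the conditional joint coverage probability $\Pr(\SIR_{us} > T_s,\ \SIR_{sm} > T_b \mid \varepsilon_s, r_s, r, \theta)$ in closed form; and then to de--condition by integrating against the joint density $f(r_s,r)$ established in the preceding lemmas together with the remaining angular coordinate $\theta$, using \eqref{eqn:Eqn1_PrAssoc} and \eqref{eqn:Eqn5_Pcu} to assemble the result.

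The first step is to rewrite the two $\SIR$ constraints as tail events for the desired--link fading powers. From \eqref{eqn:Eqn2_SIRus}, $\SIR_{us} > T_s$ is equivalent to $g_{o r_s} > \tfrac{T_s \|r_s\|^{\alpha_s}}{P_s}\big(I_s(o) + I_m(o) + P_m g_{o r_m}\|r_m\|^{-\alpha_m}\big)$, and from \eqref{eqn:Eqn3_SIRsm}, $\SIR_{sm} > T_b$ is equivalent to $g_{r_s r_m} > \tfrac{T_b \|r\|^{\alpha_m}}{P_m}\big(I_s(r_s) + I_m(r_s) + \beta P_m\big)$. The essential observation is that although $I_s(o)$ and $I_s(r_s)$ (and likewise $I_m(o)$ and $I_m(r_s)$) are driven by the \emph{same} point configurations $\Phi_s$ and $\Phi_m$, the fading variables entering them --- $g_{oz}$ versus $g_{r_s z}$ for $z \in \Phi_s$, and $g_{ov}$ versus $g_{r_s v}$ for $v \in \Phi_m$ --- are distinct unit--mean exponentials, hence mutually independent, and are also independent of $g_{o r_s}$, $g_{r_s r_m}$ and $g_{o r_m}$. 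Therefore, conditioning additionally on $\Phi_s$ and $\Phi_m$, the two tail events become independent exponential tail events, so the conditional joint coverage probability equals $\E[\,e^{-A}e^{-B} \mid \Phi_s, \Phi_m\,]$, where $A$ and $B$ are the two right--hand sides above and the remaining expectation runs over all interferer fadings and over $g_{o r_m}$.

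The second step is the fading expectation, carried out factor by factor using $\E[e^{-tg}] = (1+t)^{-1}$ for a unit--mean exponential $g$. The $g_{o r_m}$ factor yields $g(s_1', r_m, \alpha_m)$; the deterministic residual self--interference contributes a deterministic exponential factor matching the $-\beta s_2$ term in \eqref{eqn:Eqn6_Pcus_FD}; each P--BS interferer $z \in \Phi_s \cap B(o, r_s)^c$ contributes, by independence of $g_{oz}$ and $g_{r_s z}$, the product $g(s_1, z, \alpha_s)\,g(s_2, z - r_s, \alpha_s)$; and each M--BS interferer $v \in \Phi_m \cap \mathcal{M}$ contributes $g(s_1', v, \alpha_m)\,g(s_2', v - r_s, \alpha_m)$, where $\mathcal{M} = A_2^c$ with $A_2 = B(o, \Delta_m^{-1} r_s^{\alpha_s/\alpha_m}) \cup B(r_s, \|r\|)$. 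The constants $s_1, s_2, s_1', s_2'$ are forced precisely by the way the transmit powers $P_s, P_m$ inside $A$ and $B$ combine with $T_s$, $T_b$ and the link distances. Taking then the outer expectation over the two independent PPPs, one uses that conditioning on $\varepsilon_s$ and on $r_s$ leaves $\Phi_s$ restricted to $A_1^c = B(o, r_s)^c$ a PPP of intensity $\lambda_s$ and --- by the Slivnyak/void argument already used to obtain $f(r_s,r)$ --- conditioning on the backhauling M--BS leaves $\Phi_m$ restricted to $A_2^c$ a PPP of intensity $\lambda_m$; the probability generating functional $\E\big[\prod_{x \in \Phi \cap W} h(x)\big] = \exp\big(-\lambda \int_W (1 - h(x))\,\D x\big)$ then turns the two interferer products into the two exponential--of--integral terms over $A_1^c$ and $A_2^c$. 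Multiplying through by the $g(s_1', r_m, \alpha_m)$ and self--interference factors reproduces the bracketed integrand of \eqref{eqn:Eqn6_Pcus_FD}.

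The final step removes the conditioning: by the isotropy of the PPPs the serving P--BS can be placed on a fixed axis, the joint law of the access distance $r_s$ and the backhaul distance $r = \|r_m - r_s\|$ is the density $f(r_s,r)$ from the preceding lemmas, and the remaining angular coordinate $\theta$ of the backhauling M--BS about the P--BS contributes the $\int_0^{2\pi}(\cdot)\,\D\theta$, which yields the stated triple integral. The main obstacle is the bookkeeping in the middle step: tracking which nodes interfere with which link and from which vantage point (the origin for the access link, the P--BS for the backhaul link), excising the serving M--BS at $r_m$ from \emph{both} interference fields while re--introducing it only as the isolated $g(s_1', r_m, \alpha_m)$ term, and --- the genuinely non--trivial point relative to a single--link coverage computation --- justifying that the \emph{joint} event $\{\SIR_{us} > T_s,\ \SIR_{sm} > T_b\}$ collapses to a single PGFL--type product over each PPP, which works only because the common point configuration carries fadings toward the user and toward the P--BS that are independent of one another.
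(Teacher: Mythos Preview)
Your proposal is correct and follows essentially the same route as the paper's proof in Appendix~\ref{app:App2_PcSc}: condition on the geometry, convert the two $\SIR$ constraints to exponential tail events for $g_{o r_s}$ and $g_{r_s r_m}$, average out the per--link fadings to obtain products of $g(\cdot,\cdot,\cdot)$ factors over each PPP, then apply the PGFL to $\Phi_s$ on $A_1^c$ and $\Phi_m$ on $A_2^c$, and finally integrate against $f(r_s,r)$ and $\theta$. If anything, you are more explicit than the paper on two points that the appendix glosses over --- namely the Slivnyak reduction that leaves $\Phi_m$ a PPP on $A_2^c$ after conditioning on the backhauling M--BS, and the reason the \emph{joint} event still yields a single PGFL product per process (the fadings $g_{oz}$ and $g_{r_s z}$ from a common interferer toward two distinct receivers are independent).
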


\begin{proof}
 See Appendix~\ref{app:App2_PcSc}
\end{proof}

The integral in Lemma \ref{lem:Lemma1} can be  dived  into three integrals over the variables $\theta, r_s$ and $r$ corresponding to the cases \emph{  $(A),\,(B) \text{ or } (C)$} of Fig.~\ref{fig:Fig4_PcSc_SysArrangements} 
and Fig.~\ref{fig:Fig5_PcSc_SysArrangements_DeltaLess1}. Of particular interest, is the density function of 
\emph{Case $(B)$}, where the backhaul and the inner macro discs intersect. Backhaul disc is the one that has distance from the serving P-BS 
to the serving P-BS's backhauling M-BS as the radius. Though the expression for it has been derived as in \eqref{eqn:Eqn_pdf_DeltaGrThan1}, 
it is hard to compute numerically. Therefore, probability for the intersection case is analyzed below.\\
Let $\mathcal{C}$ and $\mathcal{I}$ denote events \emph{user covered under P-BS} and \emph{intersection of the 
backhaul and the inner macro discs} of Fig.~\ref{fig:Fig4_PcSc_SysArrangements} respectively. Then $\mathcal{I}$ is defined as
\[\mathcal{I} \triangleq \|r-\Delta_m^{-1} r_s^{\al_s / \al_m}\|~\le~\|r_s\|~\le~\|r+\Delta_m^{-1} r_s^{\al_s / \al_m}\|.\] 
Using Bayes rule, the probability $\Pr\left({\mathcal{I} \mid \mathcal{C}}\right)$  is
\begin{equation}
\begin{aligned}
\Pr\left({\mathcal{I} \mid \mathcal{C}}\right)\,\,= \frac{\Pr\left({\mathcal{C},\,\mathcal{I}}\right)}{\Pr\left({\mathcal{C}}\right)}.
\end{aligned}
\end{equation}
 The expressions for 
$\Pr\left( \mathcal{C},\mathcal{I} \right)$ and $\Pr\left( \mathcal{C}\right)$ are already derived in equation \eqref{eqn:Eqn6_Pcus_FD}.

 \begin{figure}[ht!]
 \centering 
 \captionsetup{font=footnotesize}
 \begin{tikzpicture}
 \begin{axis}[scale=1.2,
 grid = both,
 legend cell align=left,
 legend style={align=left,at={(1.0,0.5)},anchor=east},
 xlabel ={$B_s$ (dB, P-BS Bias)},
 ylabel = {Probability of Network Topology }
 ]
 
\addplot[only marks,color=black]
coordinates{
(22.000000, 0.999869) (24.000000, 0.999876) (26.000000, 0.999883) (28.000000, 0.999887) (30.000000, 0.999890) (32.000000, 0.999893) (34.000000, 0.999896) (36.000000, 0.999897) (38.000000, 0.999897) (40.000000, 0.999898) (42.000000, 0.999898) (44.000000, 0.999899) (46.000000, 0.999899) (48.000000, 0.999900) (50.000000, 0.999900) (52.000000, 0.999900) (54.000000, 0.999900) (56.000000, 0.999900) (58.000000, 0.999900) (60.000000, 0.999900) 
};\addlegendentry{Total Prob.}

\addplot+[mark=triangle, color=blue]
coordinates{
(22.000000, 0.037584) (24.000000, 0.052359) (26.000000, 0.069831) (28.000000, 0.084436) (30.000000, 0.099195) (32.000000, 0.114060) (34.000000, 0.126002) (36.000000, 0.136367) (38.000000, 0.147434) (40.000000, 0.155487) (42.000000, 0.161879) (44.000000, 0.170067) (46.000000, 0.173075) (48.000000, 0.176709) (50.000000, 0.179773) (52.000000, 0.183063) (54.000000, 0.184969) (56.000000, 0.188636) (58.000000, 0.192738) (60.000000, 0.192539) 
};\addlegendentry{Case (A)}

\addplot+[mark=oplus, color=red]
coordinates{
(22.000000, 0.283656) (24.000000, 0.257451) (26.000000, 0.231230) (28.000000, 0.205454) (30.000000, 0.179870) (32.000000, 0.156807) (34.000000, 0.136777) (36.000000, 0.119342) (38.000000, 0.101966) (40.000000, 0.086671) (42.000000, 0.073626) (44.000000, 0.062513) (46.000000, 0.052814) (48.000000, 0.044764) (50.000000, 0.038781) (52.000000, 0.033469) (54.000000, 0.029132) (56.000000, 0.024488) (58.000000, 0.020871) (60.000000, 0.015603)   
};\addlegendentry{Case (B)}

\addplot+[mark=o, color=green]
coordinates{
(22.000000, 0.678649) (24.000000, 0.690082) (26.000000, 0.698833) (28.000000, 0.710005) (30.000000, 0.720832) (32.000000, 0.729031) (34.000000, 0.737119) (36.000000, 0.744190) (38.000000, 0.750498) (40.000000, 0.757741) (42.000000, 0.764394) (44.000000, 0.767320) (46.000000, 0.774011) (48.000000, 0.778427) (50.000000, 0.781346) (52.000000, 0.783368) (54.000000, 0.785799) (56.000000, 0.786776) (58.000000, 0.786290) (60.000000, 0.791758)  
};\addlegendentry{Case (C)}
\end{axis}
 \end{tikzpicture}
 \caption{Probability of Network Topology vs. P-BS Bias in Small cell association ($T_m = T_s = T_b = -10\,$dB. $B_m = 0\,$dB, $P_m = 22\,$dB, $P_s = 0$ dB, $\al_m = 2.8,\, \al_s = 4,\,\lambda_s = 4 \lambda_m$). Each plot shows the probability of the network being in a particular geometry. Case (A) denotes \emph{Zero Intersect. Prob.} curve that 
 depicts the probability of the inner macro and the backhaul disc having zero intersection, Case (B) (\emph{Intersect Prob.}) curve 
 depicts the probability of the inner macro and the backhaul disc intersecting and Case (C) (\emph{Engulf. Prob.})
 curve depicts the probability of the backhaul disc engulfing the inner macro disc
 in event of small cell association of Fig.~\ref{fig:Fig4_PcSc_SysArrangements}. Notice that in the limit of bias towards P-BS, 
 i.e. high bias towards small cell tier, the \emph{Intersect Prob.} curve  goes to $0$, rendering numerical computations 
 much easier.}
 \label{fig:Fig_PcSc_SystemState}
 \end{figure}
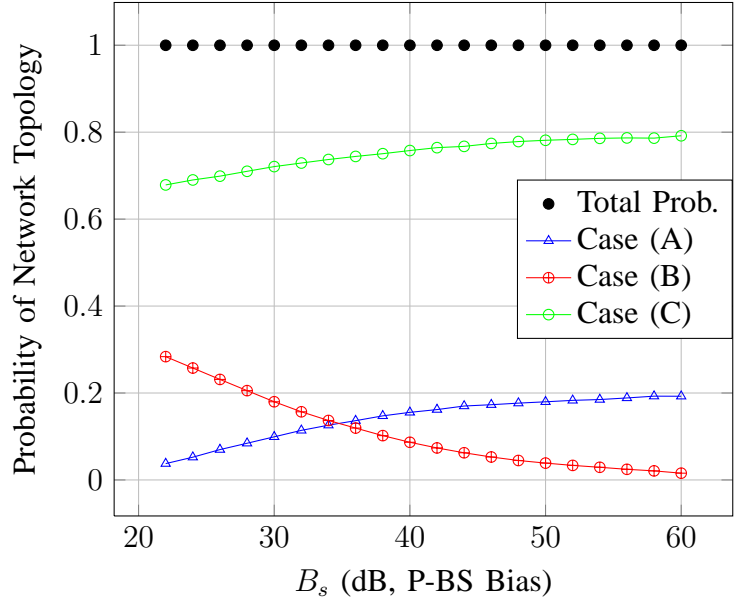
The plot in Fig.~\ref{fig:Fig_PcSc_SystemState} reveals useful information about the network topology. For the given system model, 
at reasonably high biasing towards the P-BS, the system mostly remains in the state of \emph{Case $(A)$} or \emph{Case $(C)$} of 
Fig.~\ref{fig:Fig4_PcSc_SysArrangements}. Therefore, coverage could be approximated by averaging over system states of \emph{Case $(A)$} and  \emph{Case $(C)$} alone, which is much more tractable than using the entire joint density function---a rather complex function to evaluate. The plot also makes practical sense, as a HetNet under typical circumstances, would be operated in a mode highly biased towards the P-BSs \cite{damnjanovic2011survey,Airspan1}.

\subsection{Macro Cell Coverage in IBFD}
\label{subsec:Subsec_MC_Covg}
Coverage probability for a user associated with an M-BS is derived here. For such a user, there is only a single active link (user-M-BS) 
since the M-BSs are fiber backhauled to the core network. In this case, it is more convenient to calculate coverage as 
$\Pr\left(\SIR_{um} > T_m, \varepsilon_m\right)$ directly rather than the conditional coverage based on the event $\varepsilon_m$.
\begin{lem}
The probability of coverage for a user associated with a M-BS in the given two-tier IBFD network is denoted by \p{u,m}{f}($T_m$) and is given as
\begin{equation}
\begin{aligned}
P_{u,m}^{f}(T_m) &= \int\limits_{r_m' = 0}^\infty \int\limits_{r_s = \Delta_s r_m'^{\al_m / \al_s}}^\infty F(\Phi_m, \Phi_s) f(r_m',\, r_s) \, \D r_m' \D r_s,
\end{aligned}
\label{eqn:Eqn7}
\end{equation}
where $f(r_m',\, r_s)$ denotes the density function of the nearest M-BS and P-BS and is given as:
\begin{equation}
f(r_m',\, r_s) = 2 \pi \lambda_m r_m' e^{-\pi \lambda_m r_m'^2} 2 \pi \lambda_s r_s e^{-\pi \lambda_s r_s^2}
 \label{eqn:Eqn8}
 \end{equation} and
\[ F(\Phi_m,\Phi_s) =  e^{-\pi r_m'^2 \lambda_m T_m^{2/\alpha_m} \int\limits_{T_m^{-2/\alpha_m}}^\infty \frac{1}{1 + t^{\alpha_m/2}}\, \D t}\;\;e^{ -\pi r_m'^{2 \al_m / \al_s} \lambda_s\left(\frac{P_s T_m}{P_m}\right)^{\frac{2}{\alpha_s}} \int\limits_{\left(\frac{B_s}{B_m T_m}\right)^{\frac{2}{\alpha_s}} }^\infty \frac{1}{1+t^{\alpha_s/2}}\,\D t  }. \]
\label{lem:Lemma2}
\end{lem}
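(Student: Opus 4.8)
The plan is to compute $P_{u,m}^{f}(T_m)$ as the \emph{joint} probability $\Pr(\SIR_{um}>T_m,\,\varepsilon_m)$, thereby avoiding division by $\Pr(\varepsilon_m)$, by conditioning on the distance $r_m'$ to the nearest M-BS and on the M-BS association event $\varepsilon_m$. By the association rule of Section~\ref{subsec:Subsec_AssocModel}, conditioned on the nearest M-BS lying at $r_m'$ the event $\varepsilon_m$ is exactly $\{x_{s,min}\ge \Delta_s r_m'^{\al_m/\al_s}\}$ with $\Delta_s:=\Delta_m^{\al_m/\al_s}$, i.e.\ the \emph{void} event that $\Phi_s$ has no point in $B(o,\Delta_s r_m'^{\al_m/\al_s})$. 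Since $\Phi_m$ and $\Phi_s$ are independent PPPs, the joint law of (nearest M-BS distance, nearest P-BS distance) is the product of the two Rayleigh-type nearest-neighbour densities, which is \eqref{eqn:Eqn8}; integrating $r_s$ over $\{r_s\ge\Delta_s r_m'^{\al_m/\al_s}\}$ against this reproduces the void probability $e^{-\pi\lambda_s\Delta_s^2 r_m'^{2\al_m/\al_s}}=\Pr(\varepsilon_m\mid r_m')$. This is why the outer integral in \eqref{eqn:Eqn7} runs over $\{r_s\ge \Delta_s r_m'^{\al_m/\al_s}\}$ while the integrand $F(\Phi_m,\Phi_s)$ carries no residual dependence on $r_s$.

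Next I would write the conditional coverage as a product of Laplace transforms. Fix $r_m'$ and condition on $\varepsilon_m$. By Slivnyak's theorem and the void-probability property of PPPs, the interfering M-BSs then form a PPP of intensity $\lambda_m$ on $B(o,r_m')^c$ (driving $\hat I_m(o)$), the interfering P-BSs form an \emph{independent} PPP of intensity $\lambda_s$ on $B(o,\Delta_s r_m'^{\al_m/\al_s})^c$ (driving $\hat I_s(o)$), and the unit-mean exponential serving-link fade $g_{or_m'}$ is independent of both. Hence, with $s=T_m\|r_m'\|^{\al_m}/P_m$,
\begin{equation*}
\Pr\!\left(\SIR_{um}>T_m\mid r_m',\varepsilon_m\right)=\E\!\left[\exp\!\left(-s\big(\hat I_s(o)+\hat I_m(o)\big)\right)\right]=\Lap_{\hat I_m}(s)\,\Lap_{\hat I_s}(s),
\end{equation*}
the factorization being precisely the independence of the two interference fields.

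Then I would evaluate each Laplace transform via the probability generating functional of a PPP. For the macro field,
\begin{equation*}
\Lap_{\hat I_m}(s)=\exp\!\left(-2\pi\lambda_m\int_{r_m'}^{\infty}\frac{sP_m t^{-\al_m}}{1+sP_m t^{-\al_m}}\,t\,\D t\right),
\end{equation*}
and the same expression holds for $\hat I_s$ with $(\lambda_m,P_m,\al_m,r_m')$ replaced by $(\lambda_s,P_s,\al_s,\Delta_s r_m'^{\al_m/\al_s})$. Substituting $s=T_m\|r_m'\|^{\al_m}/P_m$ makes the integrands scale-free: in the macro integral $sP_m t^{-\al_m}=T_m(r_m'/t)^{\al_m}$, and the substitutions $t=r_m'\sqrt{y}$ then $y=T_m^{2/\al_m}t$ collapse the exponent to $-\pi\lambda_m r_m'^2 T_m^{2/\al_m}\int_{T_m^{-2/\al_m}}^{\infty}(1+t^{\al_m/2})^{-1}\,\D t$, the first factor of $F$. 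For the P-BS term the key simplification is that, using $\Delta_s^{\al_s}=\Delta_m^{\al_m}=(P_sB_s)/(P_mB_m)$, the constant $sP_s(\Delta_s r_m'^{\al_m/\al_s})^{-\al_s}$ reduces to $T_mB_m/B_s$; the analogous pair of substitutions then gives $-\pi\lambda_s r_m'^{2\al_m/\al_s}(P_sT_m/P_m)^{2/\al_s}\int_{(B_s/(B_mT_m))^{2/\al_s}}^{\infty}(1+t^{\al_s/2})^{-1}\,\D t$, the second factor of $F$. Multiplying the two exponentials gives $F(\Phi_m,\Phi_s)$, and integrating against \eqref{eqn:Eqn8} over $\{r_m'>0,\ r_s\ge\Delta_s r_m'^{\al_m/\al_s}\}$ yields \eqref{eqn:Eqn7}.

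The bulk of the work — the two generating-functional integrals and the changes of variables that bring them to the stated form — is routine. The one point that needs care is the conditioning: one must condition on the \emph{void} event $\varepsilon_m$ (no P-BS in $B(o,\Delta_s r_m'^{\al_m/\al_s})$) rather than on the exact nearest-P-BS distance, so that the P-BS interferers remain a homogeneous PPP on the whole punctured plane $B(o,\Delta_s r_m'^{\al_m/\al_s})^c$, independent of the M-BS interferers on $B(o,r_m')^c$. Only then does the conditional coverage factor as $\Lap_{\hat I_m}\Lap_{\hat I_s}$ with no leftover dependence on $r_s$, with the residual $r_s$-integral contributing merely the void probability $\Pr(\varepsilon_m\mid r_m')$. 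Establishing that clean factorization is the crux; everything downstream is algebra.
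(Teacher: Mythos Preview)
Your proposal is correct and follows essentially the same route as the paper: compute $\Pr(\SIR_{um}>T_m,\varepsilon_m)$ directly, condition on $r_m'$, use the exponential fading to reduce to a product of Laplace transforms over the two independent PPPs, and evaluate each via the PGFL. The paper cites the single-tier result of \cite{Jeff} for the final integral simplifications whereas you carry out the changes of variable explicitly, and you are more careful than the paper about articulating why the conditional coverage has no residual $r_s$-dependence (conditioning on the void event rather than on the exact nearest P-BS location), but the underlying argument is the same.
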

\begin{proof}
See Appendix~\ref{app:App_MBS_Covg_IBFD}
\end{proof}

\subsection{Small Cell Coverage in FDD}
In FDD case the frequency resources are orthogonalized between the access and backhaul tiers and so interference to a user in the DL is much reduced. This 
comes at the cost of halving the spectrum for access and backhaul link each.

\begin{lem}
 The probability of coverage for a user associated with a P-BS in the given two-tier FDD network is denoted by \p{u,s}{h}($T_s,T_b$) and given as
\begin{multline}
 P_{u,s}^{h}(T_s,T_b) = \int\limits_{\mathbb{R}^2} e^{ -\lambda_s \int\limits_{z \in \Phi_s \cap A_1^c} 1 - g\left(\PL{r_s}{\alpha_s}T_s,z,\alpha_s\right) \, \D z -\lambda_m \int\limits_{v \in \Phi_m \cap A_2^c}  1 - g\left(\PL{r}{\alpha_m}T_b,v-r_s,\alpha_m\right) \, \D v}\:
 f(r_s,r) \, \D r_s \, \D r,
 \label{eqn:Eqn11_PcSc_FDD}
\end{multline}
\text{where } $g(s,x,\alpha) = \frac{1}{1+s\PL{x}{-\alpha}},\,\,\,\, A_1 = B(o,r_s),$  $A_2 = ( B(o,\Delta_m^{-1} r_s^{\al_s / \al_m})\cup B(r_s,\|r\|))$.
\label{lem:Lemma3}
\end{lem}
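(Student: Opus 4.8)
The plan is to replay the proof of Lemma~\ref{lem:Lemma1}, discarding every contribution that is peculiar to in-band full duplex. First I would record the FDD $\SIR$s. Because the M-BS$\to$P-BS backhaul now occupies a band orthogonal to the P-BS$\to$user access band, a user served by the nearest P-BS $S$ (at distance $r_s$) is interfered only by the remaining P-BSs, so $\SIR_{us}^{h}=P_sg_{or_s}\|r_s\|^{-\al_s}/I_s(o)$ with $I_s(o)$ as in~\eqref{eqn:Eqn2_SIRus} but with the $I_m(o)$ and $P_mg_{or_m}\|r_m\|^{-\al_m}$ terms deleted; similarly the backhaul link $M\to S$ (with $M$ the nearest M-BS to $S$ and $\|r_m-r_s\|=r$) is interfered only by the remaining M-BSs and carries no residual self-interference, so $\SIR_{sm}^{h}=P_mg_{r_sr_m}\|r\|^{-\al_m}/I_m(r_s)$ with $I_m(r_s)=\sum_{v\in\Phi_m\cap A_2^c}P_mg_{r_sv}\|v-r_s\|^{-\al_m}$ and $A_2=B(o,\Delta_m^{-1}r_s^{\al_s/\al_m})\cup B(r_s,\|r\|)$. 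No surviving term depends on the location of $M$ relative to the origin, so the angular integration that Lemma~\ref{lem:Lemma1} needed (through $g(s_1^{\prime},r_m,\al_m)$) collapses and the result will be a plain double integral in $(r_s,r)$ against the density $f(r_s,r)$ of the two preliminary lemmas.

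Next I would condition on $(r_s,r)$, whose law on the association event $\varepsilon_s$ is exactly $f(r_s,r)$. A void-probability / Slivnyak argument then shows that, given $\varepsilon_s$ and $(r_s,r)$, the interfering P-BSs form a PPP of intensity $\lambda_s$ on $B(o,r_s)^c$ and, independently, the interfering M-BSs form a PPP of intensity $\lambda_m$ on $A_2^c$; this independence is inherited from the independence of $\Phi_s$ and $\Phi_m$, since $\varepsilon_s$ together with ``$M$ nearest to $S$'' imposes nothing on the two processes beyond the emptiness of $B(o,r_s)$ in $\Phi_s$ and of $B(o,\Delta_m^{-1}r_s^{\al_s/\al_m})$ and $B(r_s,\|r\|)$ in $\Phi_m$ — all of which $f$ already encodes. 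As the desired-signal fades $g_{or_s}$ and $g_{r_sr_m}$ are independent unit-mean exponentials and all interference fades are i.i.d.\ and independent of them, the event $\{\SIR_{us}^{h}>T_s\}$ (a functional of $\Phi_s$ and the fades) and $\{\SIR_{sm}^{h}>T_b\}$ (a functional of $\Phi_m$ and the fades) are conditionally independent, so $\Pr(\SIR_{us}^{h}>T_s,\SIR_{sm}^{h}>T_b\mid r_s,r)$ factors into $\Pr(\SIR_{us}^{h}>T_s\mid r_s)\,\Pr(\SIR_{sm}^{h}>T_b\mid r_s,r)$.

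Finally I would compute each factor by the standard Rayleigh-fading trick. Writing $\SIR_{us}^{h}>T_s$ as $g_{or_s}>\PL{r_s}{\al_s}T_s\sum_{z}g_{oz}\|z\|^{-\al_s}$ (the $P_s$ cancels, hence no power ratio) and integrating out the exponential $g_{or_s}$ turns the probability into the Laplace transform of $\sum_{z\in\Phi_s\cap B(o,r_s)^c}g_{oz}\|z\|^{-\al_s}$ evaluated at $\PL{r_s}{\al_s}T_s$; the PGFL of the PPP with $\E[e^{-sg}]=1/(1+s)$ gives $\exp\!\big(-\lambda_s\int_{z\in\Phi_s\cap A_1^c}1-g(\PL{r_s}{\al_s}T_s,z,\al_s)\,\D z\big)$, $A_1=B(o,r_s)$. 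The backhaul factor is the identical computation with $\Phi_m$, exclusion region $A_2^c$, the shift $v\mapsto v-r_s$, exponent $\al_m$, and argument $\PL{r}{\al_m}T_b$ (again the $P_m$ cancels, which is precisely why the FDD exponent carries no $P_s/P_m$ ratio in contrast to the IBFD $s_2$). Multiplying the two exponentials and integrating against $f(r_s,r)$ over $r_s,r>0$ yields~\eqref{eqn:Eqn11_PcSc_FDD}. The one delicate point is this conditional-PPP reduction: one must verify that conditioning on $\varepsilon_s$ and the serving distances leaves no residual coupling between the two interference fields beyond those three exclusion balls — essentially the Palm-measure factorization used in Appendix~\ref{app:App2_PcSc}, now with the cross-terms $g(s_2,\cdot)$ and $g(s_1^{\prime},\cdot)$, the factor $g(s_1^{\prime},r_m,\al_m)$, and the self-interference $\beta s_2$ all absent.
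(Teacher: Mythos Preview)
Your proposal is correct and follows essentially the same route as the paper: condition on $(r_s,r)$, write the FDD $\SIR$s with only intra-tier interference (so $P_s$ and $P_m$ cancel), then replay the Rayleigh/PGFL computation of Appendix~\ref{app:App2_PcSc} with the cross-tier terms, the $g(s_1',r_m,\al_m)$ factor, the $\beta$ term, and hence the angular integral all dropped. The paper's own proof is in fact terser than yours---it simply records the FDD joint event and points to Appendix~\ref{app:App2_PcSc}---whereas you additionally make explicit the conditional independence of the two $\SIR$ events (one a functional of $\Phi_s$, the other of $\Phi_m$), which is the reason the exponent in~\eqref{eqn:Eqn11_PcSc_FDD} splits cleanly into two separate integrals rather than the coupled $g(s_1,\cdot)g(s_2,\cdot)$ products of~\eqref{eqn:Eqn6_Pcus_FD}.
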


\begin{proof}
\begin{equation*}
 \Pr\left(\SIR_{us} > T_s,\, \SIR_{sm} > T_b \mid \varepsilon_s\right) = \E_{r_s,r}\left[ \underbrace{\Pr\left(\SIR_{us} > T_s,\, \SIR_{sm} > T_b \mid r_s,\,r\right)}_{P_{u,s}^{'h}(T_s,T_b)} \right].
\end{equation*}
The representation of conditioning on points $r_s$ and $r$ is dropped in interest of better clarity, 
for the following derivation. For FDD case, the user (or P-BS) receives interference only from the tier that it is associated 
with.

\begin{equation*}
 P_{u,s}^{'h}(T_s,T_b) = \Pr\left(  \frac{P_s g_{o r_s} \PL{r_s}{-\al_s}}{\sum\limits_{z \in A_1^c}P_s g_{o z} \PL{z}{-\al_s} } > T_s, \right. \\
   \left. \frac{P_m g_{r_s r_m} \PL{r}{-\al_m}}{ \sum\limits_{z \in A_2^c}P_m g_{r_s z} \PL{z-r_s}{-\al_m}} > T_b   \right), \\
\end{equation*}

Proceeding in the same way as in Appendix \ref{app:App2_PcSc} for IBFD coverage, the expression for coverage in the FDD network 
could be calculated to be as \eqref{eqn:Eqn11_PcSc_FDD}.
\end{proof}

\subsection{Macro Cell Coverage in FDD}
Users associated with the macro cells in FDD case see interference only from the macro cells. 
The coverage expression uses $r_m'$ and $r_s$ for nearest P-BS and M-BS respectively as in IBFD macro cell coverage case. So 
macro cell coverage is calculated as $\Pr(\SIR_{um} > T_m, \varepsilon_m)$ directly.

\begin{lem}
\label{lem:Lemma4}
The probability of coverage for a user associated with a M-BS in the given two-tier FDD network is denoted by \p{u,m}{h}($T_m$) and given as,

\begin{equation}
P_{u,m}^{h}(T_m) = \int\limits_{r_m'=0}^\infty \int\limits_{r_s = \Delta_s r_m'^{\al_m / \al_s}}^\infty e^{-\pi r_m'^2 \lambda_m T_m^{2/\alpha_m} \int\limits_{T_m^{-2/\alpha_m}}^\infty \frac{1}{1+t^{\alpha_m/2}}\, \D t} f(r_m',r_s)\, \D r_m' \, \D r_s,
\end{equation}

where $f(r_m',r_s)$ is defined as in \eqref{eqn:Eqn8}.
\end{lem}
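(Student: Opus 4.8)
The plan is to reuse, almost verbatim, the argument behind the IBFD macro-cell result of Lemma~\ref{lem:Lemma2}, with the one structural simplification that in the FDD band plan of Section~\ref{subsec:Subsec1_BWAlloc} the M-BS access resources are orthogonal to \emph{everything} the P-BS tier transmits on, so a macro-associated user is interfered with only by the other M-BSs. Thus, for a user served by the M-BS at $r_m'$, the relevant ratio is
\[
\SIR_{um}(r_m') = \frac{P_m g_{o r_m'}\|r_m'\|^{-\al_m}}{\hat I_m(o)}, \qquad \hat I_m(o) = \sum_{z\in\Phi_m\cap B(o,r_m')^c} P_m g_{oz}\|z\|^{-\al_m},
\]
i.e. \eqref{eqn:Eqn4_SIRum} without the $\hat I_s(o)$ term. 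Following the remark preceding the lemma, I would evaluate $P_{u,m}^{h}(T_m)=\Pr(\SIR_{um}>T_m,\varepsilon_m)$ directly rather than conditioning on $\varepsilon_m$.

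First I would condition on the two relevant distances: $r_m'$, the distance from the origin to the nearest M-BS, and $r_s$, the distance to the nearest P-BS. Since $\Phi_m$ and $\Phi_s$ are independent PPPs, the void probabilities of discs give the product density $f(r_m',r_s)=2\pi\lambda_m r_m' e^{-\pi\lambda_m r_m'^2}\cdot 2\pi\lambda_s r_s e^{-\pi\lambda_s r_s^2}$ of \eqref{eqn:Eqn8}. The association event $\varepsilon_m$ is $P_mB_m r_m'^{-\al_m}\ge P_sB_s r_s^{-\al_s}$, equivalently $r_s\ge\Delta_s r_m'^{\al_m/\al_s}$ with $\Delta_s=\Delta_m^{\al_m/\al_s}$, which is exactly the inner lower limit in the claimed integral. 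Then, because $g_{o r_m'}$ is exponential with unit mean, conditioning on $r_m'$ turns the SIR constraint into $g_{o r_m'}>T_m\|r_m'\|^{\al_m}\hat I_m(o)/P_m$ (the power $P_m$ cancels), so $\Pr(\SIR_{um}>T_m\mid r_m')$ is the Laplace transform of $\hat I_m(o)/P_m$ evaluated at $s=T_m\|r_m'\|^{\al_m}$. Evaluating it with the probability generating functional of $\Phi_m$ over $B(o,r_m')^c$ together with the Rayleigh (exponential-power) fading gives
\[
\Pr(\SIR_{um}>T_m\mid r_m') = \exp\!\left(-2\pi\lambda_m\int_{r_m'}^{\infty}\frac{T_m r_m'^{\al_m}v^{-\al_m}}{1+T_m r_m'^{\al_m}v^{-\al_m}}\,v\,\D v\right).
\]
The substitution $v = r_m' T_m^{1/\al_m} t^{1/2}$ rewrites the exponent as $-\pi r_m'^2\lambda_m T_m^{2/\al_m}\int_{T_m^{-2/\al_m}}^{\infty}(1+t^{\al_m/2})^{-1}\,\D t$; multiplying this factor by $f(r_m',r_s)$ and integrating $r_m'$ over $(0,\infty)$ and $r_s$ over $(\Delta_s r_m'^{\al_m/\al_s},\infty)$ yields the stated expression.

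This is essentially routine once Lemma~\ref{lem:Lemma2} (and the FDD small-cell computation of Lemma~\ref{lem:Lemma3}, Appendix~\ref{app:App2_PcSc}) are in hand; the only real conceptual point — and the main thing the proof must justify cleanly — is why the P-BS interference term disappears, which is a direct consequence of the FDD allocation in which the M-BS access band $(1-\eta)W/2$ is disjoint from every band the P-BS tier uses, so P-BS transmissions are not co-channel and contribute nothing to $\hat I(o)$. The remaining minor care-points are (i) that the interfering field is the PPP restricted to $B(o,r_m')^c$, since no interferer can be closer than the serving M-BS, and (ii) getting the change of variables right so that the lower limit of the $t$-integral lands at $T_m^{-2/\al_m}$. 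Everything else parallels the IBFD macro-cell derivation of Appendix~\ref{app:App_MBS_Covg_IBFD} step for step, with the single-factor $F(\Phi_m,\Phi_s)$ collapsing to its M-BS factor alone.
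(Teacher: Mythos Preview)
Your proposal is correct and takes essentially the same approach as the paper: the paper's proof is a single sentence stating that Lemma~\ref{lem:Lemma4} follows directly from the proof of Lemma~\ref{lem:Lemma2} once one observes that in FDD a user receives interference only from its own tier, which is precisely the structural simplification you identify and carry through. You simply spell out explicitly the Laplace-transform and change-of-variables steps that the paper leaves implicit by pointing back to Appendix~\ref{app:App_MBS_Covg_IBFD}.
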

\begin{proof}
Lemma~\ref{lem:Lemma4} directly follows from the proof given for Lemma~\ref{lem:Lemma2}, considering a user associated with a 
given tier will receive interference only from that tier.
\end{proof}
\section{Average Rate}
This section focuses on the the achievable rate for a typical user located at the origin conditioned on the user being under 
coverage. For full-duplex
case the entire $1\,Hz$ is used for self-backhauling as well as access links by the P-BSs. At the M-BSs, $\eta\,Hz$ is used for the backhauling link 
to P-BSs and an orthogonal $(1-\eta)\,Hz$ for direct access link to the user. The arrangement is similar for half-duplex case, but for the fact that the 
spectrum is orthogonalized as $0.5\,Hz$ each, for access and backhaul links with respect to the P-BSs. Notice that the rate in DL for users connected to the P-BS 
is the minimum of rates on the M-BS to P-BS and the P-BS to user links. This is taken into account by the derivation that follows.
Let an event, that the user is covered, be defined as $\{\text{Coverage}\}\, \triangleq \, \mathbf{1}(\varepsilon_m)\{\SIR_{um} > T_m\}\,\, \cup \,\,\mathbf{1}(\varepsilon_s)\{\SIR_{us}>T_s,\,\SIR_{sm}>T_b\}$, 
where $\mathbf{1}(\varepsilon)$ denotes an indicator random variable for event $\varepsilon$,
\begin{equation}
\begin{aligned}
 \E\left[R_u \mid \text{Coverage}\right] &=\,\, \frac{1}{\Pr\{\text{Coverage}\}}\displaystyle\left( \E\left[R_{um}\mid \SIR_{um} > T_m\right] \Pr(\SIR_{um} > T_m)\, + \right. \\ 
 &\left. \E\left[{\min}\left(R_{us},\, R_{sm}\right)\mid \SIR_{us} > T_s,\, \SIR_{sm} > T_b\right] \Pr(\SIR_{us} > T_s,\, \SIR_{sm} > T_b).\displaystyle\right)
\end{aligned}
\label{eqn:Eqn_Rate}
\end{equation}
\subsection{M-BS to User Rate ($R_{um}$)}
The expectation in the first term in \eqref{eqn:Eqn_Rate} is calculated as follows. Let $\overline{\eta} = \left(1-\eta \right)$. Let $\overline{\eta}\,Hz$ be used at the M-BSs for access to user. Then, 
\begin{equation}
\E\left[R_{um} \mid \SIR_{um} > T_m\right] =\frac{\overline{\eta}}{\Pr(\SIR_{um} > T_m)}\int\limits_{t > 0}  \Pr(\SIR_{um}> \max(2^t-1,\,T_m))\, \D t,
\label{eqn:Eqn_Rate_um}
\end{equation}
\begin{proof}
	See Appendix~\ref{app:App_MBS_User_Rate_IBFD}
\end{proof}
The coverage expression for M-BS-user case, given by \eqref{eqn:Eqn7}, can be used in \eqref{eqn:Eqn_Rate_um} to obtain the average conditional rates.

\subsection{M-BS to P-BS to User Rate ($\min\left(R_{us},\, R_{sm}\right)$)}
The expectation in the second term in \eqref{eqn:Eqn_Rate} is computed now. For notational simplicity, let  
$\left\{\SIR_{us,sm} > T_{s,b}\right\}\triangleq \{\SIR_{us}>T_s,\,\SIR_{sm}>T_b\}$. On an average, each macro cell is 
assumed to backhaul $n$ small cells, where $n =\lambda_s/\lambda_m$. 
This means that on the backhaul link, the rate to each P-BS will get reduced by a factor of $n$, besides being multiplied 
by $\eta$, which is the amount of bandwidth from $1\,Hz$, that is allocated by the M-BSs for backhauling P-BSs.
\begin{multline}
\E\left[\min\left(R_{us},\, R_{sm}\right)\mid \SIR_{us,sm} > T_{s,b}\right] = \\ \frac{1}{\Pr(\SIR_{us,sm} > T_{s,b})}\int\limits_{t > 0} \Pr\left( \SIR_{us} > \max(2^{t}-1,\,T_s),\,\, \SIR_{sm} > \max(2^{\frac{nt}{\eta}}-1,\,T_b)\right)\, \D t.
\end{multline}
\begin{proof}
	See Appendix~\ref{app:App_MBS_PBS_User_Rate}
\end{proof}
The average conditional rate could be similarly calculated for the FDD system keeping note of the fact that the bandwidth gets split into 
$0.5\,Hz$ each for the backhaul and the access links, which essentially, at least theoretically, must halve the rates for a FDD system in 
comparison to a IBFD system.

\section{Numerical Results}
\label{sec:Sec_NumResults}
This section numerically computes the coverage expressions provided in the previous sections and compares them with 
Monte Carlo simulations. The  parameters used for Monte-Carlo simulation are the same as mentioned in section~\ref{subsec:Subsec1_BWAlloc}.
Simulation is done with PPPs $\Phi_s$ and $\Phi_m$ on an area of $60 \times 60$ square units with $14400$ and $3600$ nodes, respectively. 
All simulations are shown with the self-interference factor $\beta= 0$ dB, path loss exponent for the M-BS tier, $\alpha_m = 2.8$ and for the P-BS tier, $\al_s = 4$, unless mentioned otherwise. Transmit powers of M-BS and P-BS are proportionally considered as $P_m=150$ and $P_s=1$ in accordance with powers of 
$46$ dBm and $24$ dBm respectively for wide-area and local-area BS \cite{3gpp.36.104}. Bias towards M-BS $B_m = 0$ dB, unless 
mentioned otherwise.
 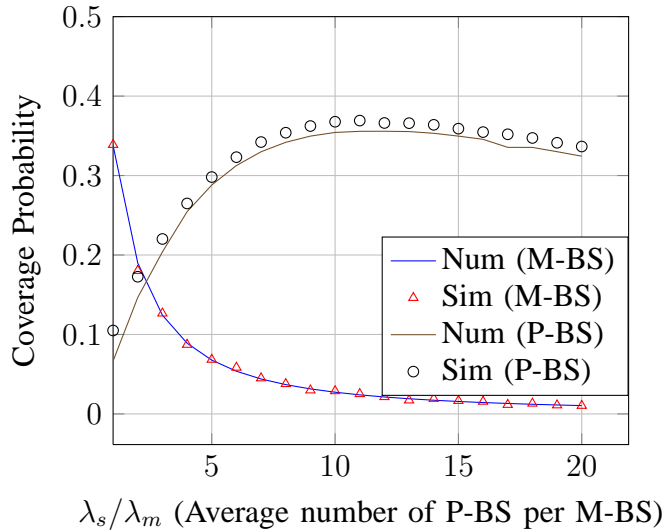
\begin{figure}[!ht]
 \centering 
 \begin{tikzpicture}
 \begin{axis}[scale=1.0,
 grid = both,
 legend cell align=left,
 legend style={align=left,at={(1.0,0.30)},anchor=east},
 xmin = 1,ymax = 0.5,
 xlabel ={$\lambda_s/\lambda_m$ (Average number of P-BS per M-BS) },
 ylabel = {Coverage Probability}
 ]
 
\addplot+[no marks, x] 
coordinates{
(1.000000, 0.337949) (2.000000, 0.189936) (3.000000, 0.124253) (4.000000, 0.089031) (5.000000, 0.067732) (6.000000, 0.053746) (7.000000, 0.043997) (8.000000, 0.036887) (9.000000, 0.031515) (10.000000, 0.027340) (11.000000, 0.024019) (12.000000, 0.021326) (13.000000, 0.019106) (14.000000, 0.017251) (15.000000, 0.015680) (16.000000, 0.014337) (17.000000, 0.013178) (18.000000, 0.012169) (19.000000, 0.011284) (20.000000, 0.010503) 
};\addlegendentry{Num (M-BS)}

\addplot+[only marks, mark=triangle, x] 
coordinates{
(1.000000, 0.338800) (2.000000, 0.180200) (3.000000, 0.126600) (4.000000, 0.087200) (5.000000, 0.068400) (6.000000, 0.058400) (7.000000, 0.045000) (8.000000, 0.037800) (9.000000, 0.029800) (10.000000, 0.028800) (11.000000, 0.025200) (12.000000, 0.021400) (13.000000, 0.017200) (14.000000, 0.019200) (15.000000, 0.016600) (16.000000, 0.015600) (17.000000, 0.011600) (18.000000, 0.013200) (19.000000, 0.011000) (20.000000, 0.010400) 
};\addlegendentry{Sim (M-BS)}

\addplot+[no marks, x] 
coordinates{
(1.000000, 0.066645) (2.000000, 0.146568) (3.000000, 0.203901) (4.000000, 0.254482) (5.000000, 0.288089) (6.000000, 0.312637) (7.000000, 0.329956) (8.000000, 0.341848) (9.000000, 0.349649) (10.000000, 0.354250) (11.000000, 0.355661) (12.000000, 0.355768) (13.000000, 0.355343) (14.000000, 0.353190) (15.000000, 0.349630) (16.000000, 0.345593) (17.000000, 0.335470) (18.000000, 0.335439) (19.000000, 0.330018) (20.000000, 0.324375) 
};\addlegendentry{Num (P-BS)}

\addplot+[only marks, mark=o, x] 
coordinates{
(1.000000, 0.105000) (2.000000, 0.172667) (3.000000, 0.220180) (4.000000, 0.264880) (5.000000, 0.298120) (6.000000, 0.323140) (7.000000, 0.342160) (8.000000, 0.353920) (9.000000, 0.362320) (10.000000, 0.367740) (11.000000, 0.369180) (12.000000, 0.366160) (13.000000, 0.365920) (14.000000, 0.363840) (15.000000, 0.359040) (16.000000, 0.354700) (17.000000, 0.351960) (18.000000, 0.347300) (19.000000, 0.341240) (20.000000, 0.336560) 
};\addlegendentry{Sim (P-BS)}

\end{axis}
 \end{tikzpicture}
 \caption{Coverage Probability vs. Small Cell Density. ($T_m = T_s = T_b = -10\,$dB. $B_s = 22$ dB $\lambda_m = 1, \al_m = 2.8, \al_s = 4$)}
 \label{fig:Fig_PcFD_Ls}
 \end{figure}

The coverage probability is plotted with respect to different parameters in  Fig.~\ref{fig:Fig_PcFD_Ls}
and Fig.~\ref{fig:Fig_PcFD_Ts}. A close match between the simulations and the numerical evaluation of the theoretical 
expressions is seen. This establishes the 
validity of the derived analytical framework, that is tractable and quick in computing the network coverage trends in the proposed IBFD self-backhauling network.
The $\SIR$ for a typical user in a IBFD self-backhauling network is far lesser than that of its FDD counterpart, which results in 
much less coverage for a IBFD network. This is primarily because of the inter-tier interference in addition to the intra-tier 
interferers (intra-tier interference present in FDD network too) in an IBFD network. More biasing towards 
the P-BS tier requires more backhauling on the same spectrum, eventually resulting in increased interference to the access links.
 \begin{figure}
 \centering 
 \begin{tikzpicture}
 \begin{axis}[scale=1.0,
 grid = both,
 legend cell align=left,
 legend style={align=left,at={(1.0,1.0)},anchor=north east},
 xlabel ={$T_s$ dB (P-BS $\SIR$ Threshold) },
 ylabel = {Coverage Probability}
 ]
 
\addplot+[no marks]
coordinates{
(-10.000000, 0.228048) (-4.559320, 0.165226) (-2.218487, 0.137662) (-0.705811, 0.108540) (0.413927, 0.094015) (1.303338, 0.082764) (2.041200, 0.075740) (2.671717, 0.070272) (3.222193, 0.065845) (3.710679, 0.062161) (4.149733, 0.060714) (4.548449, 0.058008) (4.913617, 0.055638) (5.250448, 0.053540) (5.563025, 0.051665) (5.854607, 0.048697) (6.127839, 0.047165) (6.384893, 0.045767) (6.627578, 0.044484) (6.857417, 0.043302) (7.075702, 0.043477) (7.283538, 0.042460) (7.481880, 0.041512) (7.671559, 0.040624) (7.853298, 0.039792) (8.027737, 0.039010) (8.195439, 0.038272) (8.356906, 0.037575) (8.512583, 0.036915) (8.662873, 0.036289) (8.808136, 0.035693) (8.948697, 0.035127) (9.084850, 0.034586) (9.216865, 0.034070) (9.344985, 0.033576) (9.469433, 0.033104) (9.590414, 0.032651) (9.708116, 0.032216) (9.822712, 0.031788) (9.934362, 0.031382)  
};\addlegendentry{Num (P-BS)}

\addplot+[only marks, mark=triangle]
coordinates{
(-10.000000, 0.273600) (-4.559320, 0.165300) (-2.218487, 0.136400) (-0.705811, 0.108900) (0.413927, 0.103100) (1.303338, 0.090700) (2.041200, 0.082200) (2.671717, 0.083600) (3.222193, 0.077300) (3.710679, 0.070000) (4.149733, 0.065100) (4.548449, 0.062900) (4.913617, 0.061200) (5.250448, 0.061300) (5.563025, 0.054500) (5.854607, 0.053700) (6.127839, 0.048700) (6.384893, 0.054300) (6.627578, 0.055200) (6.857417, 0.048900) (7.075702, 0.048200) (7.283538, 0.047100) (7.481880, 0.045200) (7.671559, 0.042900) (7.853298, 0.044700) (8.027737, 0.041400) (8.195439, 0.043000) (8.356906, 0.042200) (8.512583, 0.040700) (8.662873, 0.042600) (8.808136, 0.038300) (8.948697, 0.034900) (9.084850, 0.037200) (9.216865, 0.038300) (9.344985, 0.038000) (9.469433, 0.036100) (9.590414, 0.038300) (9.708116, 0.034000) (9.822712, 0.032300) (9.934362, 0.033900)  
 };\addlegendentry{Sim (P-BS)}
 
\addplot+[no marks]
coordinates{
(-10.000000, 0.317079) (-4.559320, 0.254257) (-2.218487, 0.226693) (-0.705811, 0.197571) (0.413927, 0.183046) (1.303338, 0.171796) (2.041200, 0.164772) (2.671717, 0.159303) (3.222193, 0.154876) (3.710679, 0.151192) (4.149733, 0.149745) (4.548449, 0.147039) (4.913617, 0.144669) (5.250448, 0.142571) (5.563025, 0.140696) (5.854607, 0.137729) (6.127839, 0.136197) (6.384893, 0.134798) (6.627578, 0.133516) (6.857417, 0.132333) (7.075702, 0.132508) (7.283538, 0.131491) (7.481880, 0.130543) (7.671559, 0.129656) (7.853298, 0.128824) (8.027737, 0.128041) (8.195439, 0.127303) (8.356906, 0.126606) (8.512583, 0.125946) (8.662873, 0.125320) (8.808136, 0.124725) (8.948697, 0.124158) (9.084850, 0.123617) (9.216865, 0.123101) (9.344985, 0.122608) (9.469433, 0.122135) (9.590414, 0.121682) (9.708116, 0.121248) (9.822712, 0.120820) (9.934362, 0.120414)  
};\addlegendentry{Num (Overall)}

\addplot+[only marks, mark=oplus]
coordinates{
(-10.000000, 0.367000) (-4.559320, 0.253300) (-2.218487, 0.225500) (-0.705811, 0.206000) (0.413927, 0.191900) (1.303338, 0.178300) (2.041200, 0.172400) (2.671717, 0.172300) (3.222193, 0.162600) (3.710679, 0.152600) (4.149733, 0.153200) (4.548449, 0.152100) (4.913617, 0.149800) (5.250448, 0.147700) (5.563025, 0.138500) (5.854607, 0.141600) (6.127839, 0.139100) (6.384893, 0.141600) (6.627578, 0.144200) (6.857417, 0.139900) (7.075702, 0.138000) (7.283538, 0.135100) (7.481880, 0.132300) (7.671559, 0.130300) (7.853298, 0.130800) (8.027737, 0.130200) (8.195439, 0.133800) (8.356906, 0.128900) (8.512583, 0.128200) (8.662873, 0.130000) (8.808136, 0.127800) (8.948697, 0.117800) (9.084850, 0.126600) (9.216865, 0.124200) (9.344985, 0.126900) (9.469433, 0.121600) (9.590414, 0.125100) (9.708116, 0.119500) (9.822712, 0.114600) (9.934362, 0.122400)  
 };\addlegendentry{Sim (Overall)}
 
 \end{axis}
 \end{tikzpicture}
\caption{Coverage Probability vs. P-BS $\SIR$ Threshold. ($T_m = T_b = -10\,$dB. $B_s = 22$ dB, $\al_m=2.8, \al_s=4$)}
 \label{fig:Fig_PcFD_Ts}
 \end{figure}
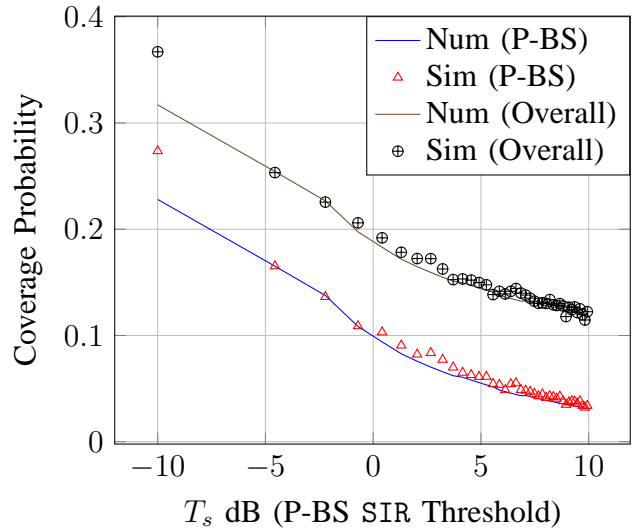
\par
Plots of Fig.~\ref{fig:Fig_PcComp_Ts} and Fig.~\ref{fig:Fig_PcComp_Ls} show the coverage variation versus the P-BS $\SIR$ threshold and ratio of densities of P-BS and M-BS. As expected, the coverage for both IBFD and FDD cases decreases with increasing $T_s$. As $T_s$ is increased, 
users associated with the P-BS do not get sufficient $\SIR$ for coverage. This implies the coverage mostly corresponds to that 
provided by the M-BS and hence at large values of $T_s$ the two curves in Fig.~\ref{fig:Fig_PcComp_Ts} approach each other.
\begin{center}
\centering
\begin{minipage}{.45\linewidth}
 \begin{tikzpicture}
 \begin{axis}[scale=0.8,
 grid = both,
 legend cell align=left,
 legend style={align=left,at={(1.0,1.0)},anchor=north east},
 xlabel ={$T_s$ dB (P-BS $\SIR$ Threshold) },
 ylabel = {Coverage Probability}
 ]
 

\addplot+[]
coordinates{
(-10.000000, 0.317079) (-4.559320, 0.254257) (-2.218487, 0.226693) (-0.705811, 0.197571) (0.413927, 0.183046) (1.303338, 0.171796) (2.041200, 0.164772) (2.671717, 0.159303) (3.222193, 0.154876) (3.710679, 0.151192) (4.149733, 0.149745) (4.548449, 0.147039) (4.913617, 0.144669) (5.250448, 0.142571) (5.563025, 0.140696) (5.854607, 0.137729) (6.127839, 0.136197) (6.384893, 0.134798) (6.627578, 0.133516) (6.857417, 0.132333) (7.075702, 0.132508) (7.283538, 0.131491) (7.481880, 0.130543) (7.671559, 0.129656) (7.853298, 0.128824) (8.027737, 0.128041) (8.195439, 0.127303) (8.356906, 0.126606) (8.512583, 0.125946) (8.662873, 0.125320) (8.808136, 0.124725) (8.948697, 0.124158) (9.084850, 0.123617) (9.216865, 0.123101) (9.344985, 0.122608) (9.469433, 0.122135) (9.590414, 0.121682) (9.708116, 0.121248) (9.822712, 0.120820) (9.934362, 0.120414)
 };\addlegendentry{IBFD}

\addplot+[]
coordinates{
(-10.000000, 0.763400) (-2.218487, 0.618533) (0.413927, 0.547080) (2.041200, 0.470680) (3.222193, 0.422200) (4.149733, 0.393240) (4.913617, 0.367400) (5.563025, 0.348920) (6.127839, 0.334280) (6.627578, 0.325040) (7.075702, 0.313480) (7.481880, 0.303680) (7.853298, 0.294240) (8.195439, 0.289200) (8.512583, 0.283200) (8.808136, 0.275680) (9.084850, 0.270000) (9.344985, 0.268040) (9.590414, 0.261133) (9.822712, 0.266000) 
 };\addlegendentry{FDD}
 \end{axis}
\end{tikzpicture}
\captionsetup{}

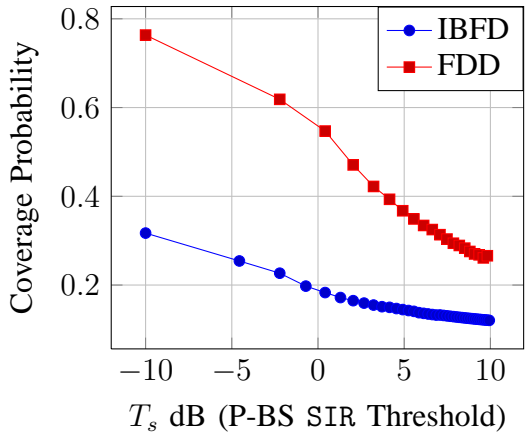
\captionof{figure}{$T_m=T_b=-10\,dB,\,\lambda_s=4\lambda_m,\,B_s=22\,$dB. As expected, coverage decreases with increasing $\SIR$ thresholds. The two curves converge asymptotically as increasing $T_s$ beyond a certain range results in a virtually macro-only network.}
\label{fig:Fig_PcComp_Ts}
\end{minipage}%
\hspace{.05\linewidth}
\begin{minipage}{.45\linewidth}
 \begin{tikzpicture}
 \begin{axis}[scale=0.8,
 grid = both,
 legend cell align=left,
 legend style={align=left,at={(1.0,0.75)},anchor=north east},
 xlabel ={$\lambda_s/\lambda_m$ (P-BSs per M-BS) },
 ylabel = {Coverage Probability}
 ]
 

\addplot+
coordinates{
(1.000000, 0.404594) (2.000000, 0.336504) (3.000000, 0.328154) (4.000000, 0.343513) (5.000000, 0.355821) (6.000000, 0.366383) (7.000000, 0.373953) (8.000000, 0.378735) (9.000000, 0.381164) (10.000000, 0.381590) (11.000000, 0.379680) (12.000000, 0.377094) (13.000000, 0.374449) (14.000000, 0.370441) (15.000000, 0.365310) (16.000000, 0.359930) (17.000000, 0.348648) (18.000000, 0.347608) (19.000000, 0.341302) (20.000000, 0.334878) (21.000000, 0.328390) (22.000000, 0.321878) (23.000000, 0.315381) (24.000000, 0.308927) (25.000000, 0.302536) (26.000000, 0.296228) (27.000000, 0.290017) (28.000000, 0.283914) (29.000000, 0.277935) (30.000000, 0.269681) (31.000000, 0.263923) (32.000000, 0.258303) (33.000000, 0.255258) (34.000000, 0.249918) (35.000000, 0.244712) (36.000000, 0.239631) (37.000000, 0.234689) (38.000000, 0.229879) (39.000000, 0.225195) (40.000000, 0.220638)  
 };\addlegendentry{IBFD $B_s = 22$ dB}
 
 \addplot+
 coordinates{
(1.000000, 0.184200) (2.000000, 0.216467) (3.000000, 0.252160) (4.000000, 0.284120) (5.000000, 0.311360) (6.000000, 0.334320) (7.000000, 0.349320) (8.000000, 0.361160) (9.000000, 0.367840) (10.000000, 0.372600) (11.000000, 0.373520) (12.000000, 0.373680) (13.000000, 0.371320) (14.000000, 0.368840) (15.000000, 0.363400) (16.000000, 0.358200) (17.000000, 0.352760) (18.000000, 0.347960) (19.000000, 0.342920) (20.000000, 0.339920) (21.000000, 0.334640) (22.000000, 0.328960) (23.000000, 0.324480) (24.000000, 0.318440) (25.000000, 0.313160) (26.000000, 0.308640) (27.000000, 0.300520) (28.000000, 0.295400) (29.000000, 0.289120) (30.000000, 0.281760) (31.000000, 0.274360) (32.000000, 0.270920) (33.000000, 0.264000) (34.000000, 0.260200) (35.000000, 0.255880) (36.000000, 0.253720) (37.000000, 0.250120) (38.000000, 0.245680) (39.000000, 0.241000) (40.000000, 0.233600) 
 };\addlegendentry{IBFD $B_s = 34$ dB}
 
\addplot+
coordinates{
(1.000000, 0.808000) (2.000000, 0.788880) (3.000000, 0.776240) (4.000000, 0.764680) (5.000000, 0.757080) (6.000000, 0.752536) (7.000000, 0.750032) (8.000000, 0.749136) (9.000000, 0.748888) (10.000000, 0.748616) (11.000000, 0.747608) (12.000000, 0.746848) (13.000000, 0.745552) (14.000000, 0.743904) (15.000000, 0.742736) (16.000000, 0.741696) (17.000000, 0.740824) (18.000000, 0.740392) (19.000000, 0.740304) (20.000000, 0.740104) (21.000000, 0.740544) (22.000000, 0.740688) (23.000000, 0.740576) (24.000000, 0.740456) (25.000000, 0.740792) (26.000000, 0.740288) (27.000000, 0.740736) (28.000000, 0.741312) (29.000000, 0.741976) (30.000000, 0.742352) (31.000000, 0.743144) (32.000000, 0.742320) (33.000000, 0.741752) (34.000000, 0.741392) (35.000000, 0.741112) (36.000000, 0.740688) (37.000000, 0.741520) (38.000000, 0.740512) (39.000000, 0.739800) (40.000000, 0.735000) 
 };\addlegendentry{FDD}
 
 \end{axis}
 \end{tikzpicture}
\captionsetup{}

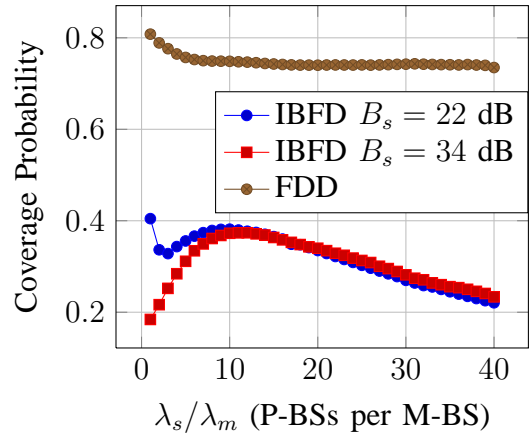
\captionof{figure}{$T_m = T_b = T_s = -10\,$dB. With higher $B_s$, there is an optimal P-BS density achieving maximum coverage. At higher biasing, the P-BS density should be commensurate with the $B_s$ values so as to fully utilize the biasing effect.}
\label{fig:Fig_PcComp_Ls}
\end{minipage}%
\end{center}
\vspace{12px}
\par
In Fig.~\ref{fig:Fig_PcComp_Ls}, the FDD coverage curve is in accordance with the findings in \cite{Jeff}, in that the coverage remains almost constant with increasing 
density of P-BSs. For the IBFD curve, the findings are different. For high biasing towards P-BS, there is an optimal density that maximizes the coverage, whereas for reasonably lower $B_s$ coverage decreases with increasing P-BS density. 
The reason is not very apparent by the total coverage plot of Fig.~\ref{fig:Fig_PcComp_Ls}, but only by inspecting the coverage within backhaul and access layers. 
It is the coverage under P-BSs that gives the shape of the high $B_s$ plot in Fig.~\ref{fig:Fig_PcComp_Ls}. 
Coverage under P-BS is composed of two probabilities--user coverage under P-BS and the P-BS coverage under a 
backhauling M-BS as shown in Fig.~\ref{fig:Fig_PcM_Bs2400}. The plot in Fig.~\ref{fig:Fig_PcM_Bs2400} shows the individual coverage probabilities of M-BS to P-BS (backhaul) and P-BS to user (access) links with varying P-BS density to gain insight into the 
behavior of the coverage plot in Fig.~\ref{fig:Fig_PcComp_Ls}. 
These plots bring out fundamental scaling trends in such a self-backhauling network. 
They show that the net coverage under P-BS increases with P-BS density till an optimum is reached. This is because during this 
increase in density, effective bias towards the P-BS increases and more users associate and subsequently get covered under 
P-BSs, albeit with lower $\SIR$.
\begin{center}
\centering
\begin{minipage}{.45\linewidth}
 \begin{tikzpicture}
 \begin{axis}[scale=0.9,
 grid = both,
 legend cell align=left,
 legend style={align=left,at={(1.0,0.82)},anchor=north east},
 xlabel ={$\lambda_s/\lambda_m$ (P-BSs per M-BS) },
 ylabel = {Coverage Probability}
 ]
 \addplot
coordinates{
(1.000000, 0.718400) (2.000000, 0.732000) (3.000000, 0.720760) (4.000000, 0.710680) (5.000000, 0.693360) (6.000000, 0.674880) (7.000000, 0.656480) (8.000000, 0.635520) (9.000000, 0.617320) (10.000000, 0.597960) (11.000000, 0.583440) (12.000000, 0.568160) (13.000000, 0.557520) (14.000000, 0.546320) (15.000000, 0.532400) (16.000000, 0.518320) (17.000000, 0.505320) (18.000000, 0.489320) (19.000000, 0.475440) (20.000000, 0.464160) (21.000000, 0.451360) (22.000000, 0.438160) (23.000000, 0.428200) (24.000000, 0.420360) (25.000000, 0.410400) (26.000000, 0.402160) (27.000000, 0.394880) (28.000000, 0.384000) (29.000000, 0.373200) (30.000000, 0.365840) (31.000000, 0.359560) (32.000000, 0.350640) (33.000000, 0.346560) (34.000000, 0.337960) (35.000000, 0.331200) (36.000000, 0.320080) (37.000000, 0.312680) (38.000000, 0.304880) (39.000000, 0.295267) (40.000000, 0.295200) 
 };\addlegendentry{Backhaul}
\addplot
coordinates{
(1.000000, 0.141600) (2.000000, 0.243800) (3.000000, 0.319200) (4.000000, 0.394520) (5.000000, 0.454880) (6.000000, 0.504880) (7.000000, 0.547880) (8.000000, 0.584880) (9.000000, 0.614240) (10.000000, 0.640360) (11.000000, 0.665720) (12.000000, 0.685360) (13.000000, 0.703080) (14.000000, 0.718800) (15.000000, 0.732480) (16.000000, 0.742520) (17.000000, 0.752720) (18.000000, 0.762000) (19.000000, 0.770800) (20.000000, 0.778200) (21.000000, 0.786960) (22.000000, 0.793280) (23.000000, 0.799920) (24.000000, 0.804280) (25.000000, 0.809480) (26.000000, 0.810880) (27.000000, 0.816400) (28.000000, 0.820000) (29.000000, 0.824640) (30.000000, 0.828760) (31.000000, 0.833240) (32.000000, 0.835400) (33.000000, 0.837560) (34.000000, 0.838560) (35.000000, 0.840400) (36.000000, 0.845080) (37.000000, 0.848800) (38.000000, 0.851840) (39.000000, 0.858000) (40.000000, 0.855000)  
 };\addlegendentry{Access} 
\addplot[mark = triangle]
coordinates{
(1.000000, 0.105400) (2.000000, 0.175400) (3.000000, 0.221240) (4.000000, 0.265520) (5.000000, 0.299120) (6.000000, 0.321960) (7.000000, 0.339000) (8.000000, 0.349120) (9.000000, 0.354920) (10.000000, 0.356560) (11.000000, 0.361360) (12.000000, 0.362880) (13.000000, 0.364600) (14.000000, 0.366560) (15.000000, 0.365040) (16.000000, 0.360840) (17.000000, 0.356520) (18.000000, 0.350520) (19.000000, 0.344160) (20.000000, 0.339280) (21.000000, 0.333520) (22.000000, 0.326360) (23.000000, 0.321240) (24.000000, 0.316840) (25.000000, 0.310360) (26.000000, 0.303920) (27.000000, 0.300760) (28.000000, 0.293800) (29.000000, 0.287360) (30.000000, 0.283600) (31.000000, 0.281520) (32.000000, 0.275240) (33.000000, 0.273120) (34.000000, 0.266280) (35.000000, 0.260920) (36.000000, 0.253600) (37.000000, 0.248640) (38.000000, 0.242840) (39.000000, 0.238400) (40.000000, 0.235200)  
 };\addlegendentry{Net P-BS} 
 \end{axis}
\end{tikzpicture}
\captionsetup{}
\captionof{figure}{$T_m=T_b=T_s=-10\,\text{ dB},\,B_s=34\,$dB. Net P-BS coverage curve is shaped by two probabilities, P-BS to user and M-BS to P-BS coverage.}
\label{fig:Fig_PcM_Bs2400}
\end{minipage}
\hspace{.05\linewidth}
\begin{minipage}{.45\linewidth}
	\begin{tikzpicture}
	\begin{axis}[scale=0.9,
	grid = both,
	legend cell align=left,
	legend style={align=left,at={(1.0,1.0)},anchor=north east},
	xlabel ={$\beta$ dB (Self-interference factor) },
	ylabel = {Coverage Probability}
	]
	
	\addplot
	coordinates{
	(-20.000000, 0.369133) (-16.000000, 0.366062) (-12.000000, 0.363683) (-8.000000, 0.361531) (-4.000000, 0.359925) (0.000000, 0.358713) (4.000000, 0.357813) (8.000000, 0.357265) (12.000000, 0.356749) (16.000000, 0.355073) (20.000000, 0.350413) (24.000000, 0.342928) (28.000000, 0.327456) (32.000000, 0.310698) (36.000000, 0.278933) 
	};
	
	\end{axis}
	\end{tikzpicture}
	\captionsetup{}
	\captionof{figure}{$T_m=T_b=T_s=-10\,$dB, $B_s = 22$ dB, $\lambda_s = 4 \lambda_m$. Coverage with varying $\beta$. As expected, coverage reduces with reducing self-interference cancellation capability.} 
	\label{fig:Fig_Pc_Beta}
\end{minipage}
 \end{center}
Moreover, there is scope for the M-BSs to cater to more P-BSs for backhauling. The result is an increase in coverage.
Beyond the optimum coverage point, user coverage under P-BSs starts to stagnate but the backhauling coverage drops steeply. 
Stagnation in P-BS to user coverage is due to the fact that at high P-BS density, users mostly associate with P-BS. Then, the 
network behaves as if a single tier network with increasing BS density which is know to be constant \cite{Jeff}. 
On the other hand the backhaul coverage drops due to the increasing interference that the access links of the P-BS pose to the backhaul links of the P-BS. This effectively results in an overall decrease in the coverage under P-BSs. The same is not true of the FDD counterpart of such a network. In FDD case, backhauling links do not interfere with the access links. With increasing P-BS density, an increasing P-BS coverage balances a declining M-BS coverage to give an almost constant net coverage. 
\par
As expected, Fig.~\ref{fig:Fig_Pc_Beta} shows the degradation of coverage with increasing self-interference factor at the P-BS.
\begin{center}
	\centering
	\begin{minipage}{.45\linewidth}
  \begin{tikzpicture}
 \begin{axis}[scale=0.9,
 grid = both,
 legend cell align=left,
 xmin = -20,
 legend style={align=left,at={(1.0,0.75)},anchor=north east},
 xlabel ={$B_s$ dB (P-BS Bias) },
 ylabel = {Coverage Probability}
 ]
 

\addplot+[]
coordinates{
(-10.000000, 0.732900) (-8.000000, 0.737696) (-6.000000, 0.738653) (-4.000000, 0.738185) (-2.000000, 0.733472) (0.000000, 0.724332) (2.000000, 0.709600) (4.000000, 0.688648) (6.000000, 0.661428) (8.000000, 0.628936) (10.000000, 0.591740) (12.000000, 0.551612) (14.000000, 0.510848) (16.000000, 0.471364) (18.000000, 0.434520) (20.000000, 0.402640) (22.000000, 0.375668) (24.000000, 0.353792) (26.000000, 0.336340) (28.000000, 0.322964) (30.000000, 0.312216) (32.000000, 0.303912) (34.000000, 0.297392) (36.000000, 0.293072) (38.000000, 0.289940) (40.000000, 0.286988) (42.000000, 0.285072) (44.000000, 0.283788) (46.000000, 0.281824) (48.000000, 0.280336) (50.000000, 0.280252) (52.000000, 0.280476) (54.000000, 0.280997) (56.000000, 0.283305) (58.000000, 0.285522) (60.000000, 0.289900) 
 };\addlegendentry{IBFD}
 
\addplot+[mark=triangle]
coordinates{
(-10.000000, 0.812700) (-8.000000, 0.811418) (-6.000000, 0.809867) (-4.000000, 0.808487) (-2.000000, 0.806940) (0.000000, 0.805812) (2.000000, 0.804896) (4.000000, 0.803592) (6.000000, 0.801840) (8.000000, 0.799652) (10.000000, 0.795948) (12.000000, 0.790644) (14.000000, 0.784892) (16.000000, 0.778476) (18.000000, 0.771652) (20.000000, 0.765772) (22.000000, 0.761044) (24.000000, 0.756104) (26.000000, 0.752064) (28.000000, 0.748676) (30.000000, 0.745192) (32.000000, 0.741836) (34.000000, 0.739984) (36.000000, 0.738740) (38.000000, 0.738204) (40.000000, 0.738824) (42.000000, 0.740476) (44.000000, 0.740892) (46.000000, 0.740952) (48.000000, 0.740304) (50.000000, 0.738808) (52.000000, 0.736640) (54.000000, 0.735833) (56.000000, 0.736985) (58.000000, 0.738129) (60.000000, 0.744600)  
};\addlegendentry{FDD}

 \end{axis}
 \end{tikzpicture}
  \captionsetup{}
\captionof{figure}{$T_m=T_b=T_s=-10\,$dB,$\,\lambda_s=4\lambda_m$. The coverage falls when users are biased to associate to P-BS, even with lesser $\SIR$ than they see with the M-BS. This stagnates at a point where almost all users are associated to P-BS.} 
 \label{fig:Fig_PcComp_Bs}
\end{minipage}
\hspace{.05\linewidth}
\begin{minipage}{.45\linewidth}
	\begin{tikzpicture}
	\begin{axis}[scale=0.9,
	grid = both,
	legend cell align=left,
	legend style={align=left,at={(1.0,0.75)},anchor=north east},
	xlabel ={$\al_s$ (P-BS Tier pathloss exponent) },
	ylabel = {Coverage Probability}
	]
	
	\addplot+[mark=triangle]
	coordinates{
(2.100000, 0.350500) (2.300000, 0.337878) (2.500000, 0.332227) (2.700000, 0.327587) (2.900000, 0.327000) (3.100000, 0.330680) (3.300000, 0.337080) (3.500000, 0.344180) (3.700000, 0.352720) (3.900000, 0.361820) (4.100000, 0.370820) (4.300000, 0.378800) (4.500000, 0.387280) (4.700000, 0.393780) (4.900000, 0.399933) (5.100000, 0.402500)  
	};\addlegendentry{IBFD}
	
	\addplot+
	coordinates{
	(2.100000, 0.661000) (2.300000, 0.674433) (2.500000, 0.690820) (2.700000, 0.705880) (2.900000, 0.720620) (3.100000, 0.732820) (3.300000, 0.743180) (3.500000, 0.751400) (3.700000, 0.756900) (3.900000, 0.761540) (4.100000, 0.765120) (4.300000, 0.766820) (4.500000, 0.769020) (4.700000, 0.771960) (4.900000, 0.773900) (5.100000, 0.778500) 
		};\addlegendentry{FDD}
	
	 \end{axis}
	 \end{tikzpicture}
	 \captionsetup{}
	 \captionof{figure}{$T_m=T_b=T_s=-10\,$dB,$\,\lambda_s=4\lambda_m$. The coverage increases with increasing pathloss exponent $\al_s$. A larger pathloss exponent is helpful in a heterogeneous network with reasonably high density, as it mitigates interference in a dense network.} 
	 \label{fig:Fig_PcComp_alphaS}
	\end{minipage}
 \end{center}
The plot in Fig.~\ref{fig:Fig_PcComp_Bs} shows that biasing more towards the 
P-BS forces the users to associate with them even when the $\SIR$ received from them is lesser than that from the 
M-BS. This results in decrease in coverage until a point where mostly all users are associated with the P-BS tier only and 
therefore the coverage stagnates. The plot also suggests that the decrease in coverage in the IBFD case is much steeper than in 
the FDD case. This is because in the IBFD case, the P-BS tier receives maximum interference--from other M-BSs as well 
as all the P-BSs. For a user to be biased in associating with a P-BS in a IBFD case is essentially 
forcing it to accept a much weaker $\SIR$ link than in the case of FDD operation. Hence the coverage for a user in IBFD operation degrades much more rapidly than in the FDD case. The plot of Fig.~\ref{fig:Fig_PcComp_alphaS} shows that higher pathloss exponent helps a dense P-BS deployment as it creates virtual cell splitting. The plot shows an initial dip in coverage, but only till $\al_s = \al_m = 2.8$.
\par
The following plots show the variation of average conditional rate of a typical user in a IBFD and FDD self-backhauling network. All rates are calculated 
keeping the bandwidth partitioning parameter $\eta = 0.8$. Since the available bandwidth 
is entirely used by the P-BSs and M-BSs in IBFD network, the rate in IBFD networks, typically tends to twice that of FDD networks. As the interference in IBFD network is more than the conventional FDD network,  the rate is not twice that of the FDD networks.
\begin{center}
\centering
\begin{minipage}{.45\linewidth}
 \begin{tikzpicture}
 \begin{axis}[scale=0.8,
 grid = both,
 legend cell align=left,
 legend style={align=left,at={(1.0,1.0)},anchor=north east},
 xlabel ={$B_s$ dB (P-BS Bias) },
 ylabel = {Covered Rate (b/s/Hz)}
 ]

\addplot+[]
coordinates{
(-10.000000, 0.241766) (-8.000000, 0.241188) (-6.000000, 0.242047) (-4.000000, 0.243296) (-2.000000, 0.245644) (0.000000, 0.248340) (2.000000, 0.252152) (4.000000, 0.257025) (6.000000, 0.262123) (8.000000, 0.267611) (10.000000, 0.273295) (12.000000, 0.277855) (14.000000, 0.281121) (16.000000, 0.283315) (18.000000, 0.283324) (20.000000, 0.280910) (22.000000, 0.276293) (24.000000, 0.269197) (26.000000, 0.259726) (28.000000, 0.248965) (30.000000, 0.237659) (32.000000, 0.226877) (34.000000, 0.217008) (36.000000, 0.207719) (38.000000, 0.199544) (40.000000, 0.192769) (42.000000, 0.186617) (44.000000, 0.181315) (46.000000, 0.177765) (48.000000, 0.175403) (50.000000, 0.173266) (52.000000, 0.171848) (54.000000, 0.170815) (56.000000, 0.169304) (58.000000, 0.167751) (60.000000, 0.165370) 
};\addlegendentry{IBFD}

\addplot+[]
coordinates{
(-10.000000, 0.132073) (-8.000000, 0.133125) (-6.000000, 0.133239) (-4.000000, 0.133587) (-2.000000, 0.133634) (0.000000, 0.134115) (2.000000, 0.134721) (4.000000, 0.136024) (6.000000, 0.137559) (8.000000, 0.139253) (10.000000, 0.140965) (12.000000, 0.142639) (14.000000, 0.143671) (16.000000, 0.144035) (18.000000, 0.144009) (20.000000, 0.143250) (22.000000, 0.141803) (24.000000, 0.139948) (26.000000, 0.137805) (28.000000, 0.135384) (30.000000, 0.133003) (32.000000, 0.130872) (34.000000, 0.128723) (36.000000, 0.126949) (38.000000, 0.125435) (40.000000, 0.124098) (42.000000, 0.122845) (44.000000, 0.122139) (46.000000, 0.121662) (48.000000, 0.121219) (50.000000, 0.120891) (52.000000, 0.120649) (54.000000, 0.120213) (56.000000, 0.119574) (58.000000, 0.118983) (60.000000, 0.118144) 
 };\addlegendentry{FDD}
 
 \end{axis}
\end{tikzpicture}
\captionsetup{}

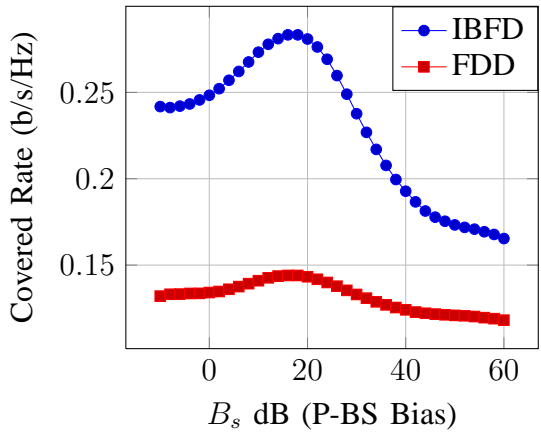
\captionof{figure}{$T_m=T_b=T_s=-10\,$dB,$\,\lambda_s=4\lambda_m$. Covered rate increases higher $B_s$ until an optimal point, after which it reduces as the P-BS density remains constant.}
\label{fig:Fig_RateComp_Bias}
\end{minipage}
\hspace{.05\linewidth}
\begin{minipage}{.45\linewidth}
  \begin{tikzpicture}
 \begin{axis}[scale=0.8,
 grid = both,
 legend cell align=left,
 legend style={align=left,at={(1.0,1.0)},anchor=north east},
 xlabel ={$\lambda_s/\lambda_m$ (P-BSs Per M-BS) },
 ylabel = {Covered Rate (b/s/Hz)}
 ]
 
\addplot+
coordinates{
(1.000000, 0.398447) (2.000000, 0.368032) (3.000000, 0.330348) (4.000000, 0.294226) (5.000000, 0.257096) (6.000000, 0.224775) (7.000000, 0.196024) (8.000000, 0.172231) (9.000000, 0.153208) (10.000000, 0.137994) (11.000000, 0.125198) (12.000000, 0.114950) (13.000000, 0.105942) (14.000000, 0.098336) (15.000000, 0.091561) (16.000000, 0.085658) (17.000000, 0.079889) (18.000000, 0.075344) (19.000000, 0.071127) (20.000000, 0.067539) (21.000000, 0.064502) (22.000000, 0.062176) (23.000000, 0.060034) (24.000000, 0.058326) (25.000000, 0.056713) (26.000000, 0.055097) (27.000000, 0.053977) (28.000000, 0.052816) (29.000000, 0.051934) (30.000000, 0.050364) (31.000000, 0.048825) (32.000000, 0.046985) (33.000000, 0.045204) (34.000000, 0.043011) (35.000000, 0.041899) (36.000000, 0.041023) (37.000000, 0.039815) (38.000000, 0.038959) (39.000000, 0.038250) (40.000000, 0.037731) 
 };\addlegendentry{IBFD}

\addplot+
coordinates{
(1.000000, 0.300911) (2.000000, 0.241885) (3.000000, 0.201217) (4.000000, 0.162945) (5.000000, 0.135635) (6.000000, 0.114160) (7.000000, 0.098727) (8.000000, 0.087261) (9.000000, 0.078004) (10.000000, 0.070503) (11.000000, 0.064366) (12.000000, 0.059389) (13.000000, 0.055111) (14.000000, 0.051571) (15.000000, 0.048537) (16.000000, 0.045810) (17.000000, 0.043253) (18.000000, 0.040864) (19.000000, 0.038798) (20.000000, 0.036957) (21.000000, 0.035206) (22.000000, 0.033404) (23.000000, 0.031884) (24.000000, 0.030252) (25.000000, 0.028647) (26.000000, 0.027300) (27.000000, 0.026444) (28.000000, 0.025494) (29.000000, 0.024685) (30.000000, 0.024022) (31.000000, 0.023422) (32.000000, 0.022651) (33.000000, 0.021982) (34.000000, 0.021343) (35.000000, 0.020674) (36.000000, 0.019931) (37.000000, 0.019267) (38.000000, 0.018796) (39.000000, 0.018312) (40.000000, 0.018300) 
 };\addlegendentry{FDD}

 \end{axis}
 \end{tikzpicture}
 \captionsetup{}

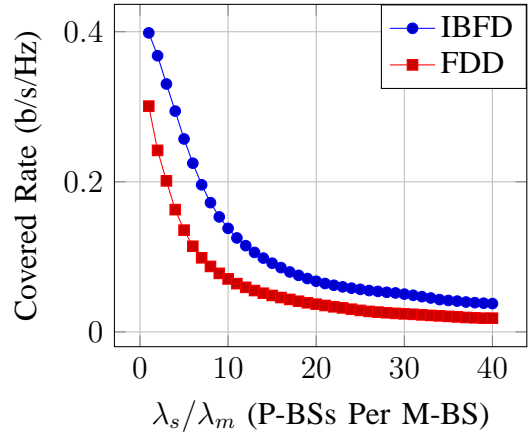
\captionof{figure}{$T_m=T_b=T_s=-10\,$dB,$\,B_s=22\,$dB. Denser P-BSs limit the backhaul bandwidth available per P-BS and the covered rate falls.}
\label{fig:Fig_RateComp_Ls}
\end{minipage}
\end{center}
 \begin{figure}[!ht]
 \centering 
 \begin{tikzpicture}
 \begin{axis}[scale=1.1,
 grid = both,
 legend cell align=left,
 legend style={align=left,at={(1.0,0.85)},anchor=east},
 xlabel ={$\eta$ (Bandwidth sharing factor at M-BS) },
 ylabel = {Covered Rate (b/s/Hz)}
 ]
 
\addplot+[mark = triangle]
coordinates{
(0.001000, 0.755520) (0.151000, 0.680388) (0.301000, 0.596656) (0.451000, 0.508552) (0.601000, 0.412183) (0.751000, 0.324211) (0.901000, 0.224764) 
 };\addlegendentry{Net Rate}

\addplot+[]
coordinates{
(0.001000, 0.755307) (0.151000, 0.652246) (0.301000, 0.545408) (0.451000, 0.433437) (0.601000, 0.314832) (0.751000, 0.205061) (0.901000, 0.084671)  
 };\addlegendentry{M-BS Rate}
 
\addplot+[mark = o]
coordinates{
(0.001000, 0.000213) (0.151000, 0.028916) (0.301000, 0.055295) (0.451000, 0.074007) (0.601000, 0.097808) (0.751000, 0.119550) (0.901000, 0.140093) 
 };\addlegendentry{P-BS Rate}
\end{axis}
 \end{tikzpicture}
 \caption{Covered Rate vs. Bandwidth Shairng at M-BS. ($T_m = T_s = T_b = -10\,$dB. $B_s = 22$ dB), Available bandwidth at the M-BS needs to be segregated into resources used for backhauling P-BS and for direct access to users.}
 \label{fig:Fig_Rate_eta}
 \end{figure}
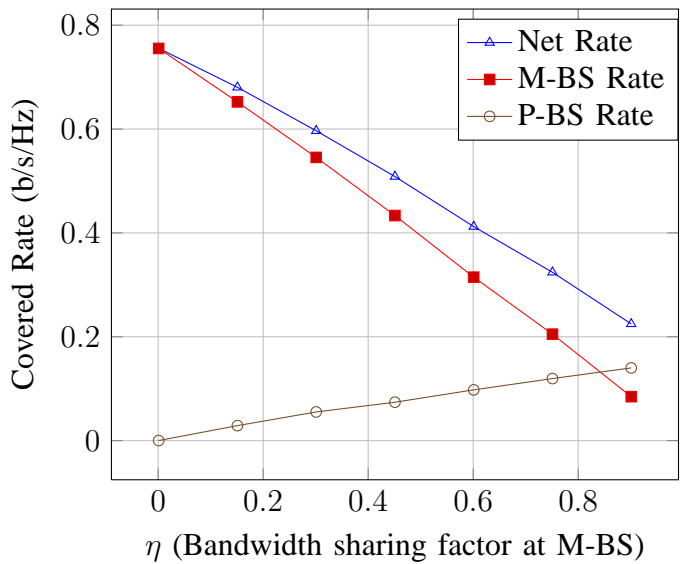

The plots in Fig.~\ref{fig:Fig_RateComp_Bias} and Fig.~\ref{fig:Fig_RateComp_Ls} show the variation of rate with varying $T_s$ and $\lambda_s$. 
As expected, the average normalized rate increases with increasing $T_s$ and decreases with increasing P-BS density. 
In Fig.~\ref{fig:Fig_RateComp_Ls}, increasing P-BS density reduces the backhaul 
bandwidth per P-BS and the rate (which is the minimum over backhaul and access link) also reduces. Thus, the 
interference from the backhaul to the access links as well as the division of bandwidth at the backhauling M-BS, are two major 
limitations in the considered IBFD self-backhauling network.
The plot in Fig.~\ref{fig:Fig_RateComp_Bias} shows that with $\eta=0.8$, there exists a bias point that achieves the maximum 
average rate. Since the density of P-BSs is four times that of M-BSs, there exists a point where all the P-BSs are fully 
utilized to deliver rate to the typical user and hence the shape of the curve. Beyond this point, as the users are forced to 
associate to a weaker $\SIR$ link from the P-BS, the average rate begins to fall. The results obtained in this section indicate two major impediments to achieving the full potential of IBFD self-backhauling 
networks that are \emph{inter-tier interference} from the backhaul to access links and \emph{bandwidth division} at the M-BS to 
accommodate backhauling resources for multiple P-BSs.\\

\section{Conclusion}
\label{sec:Sec_WF}
This work proposed and analyzed a self-backhauling HetNet architecture for IBFD as well as traditional FDD enabled base-stations. 
A tractable and quick-to-compute analytical model for network wide coverage is derived and shown to match 
simulation results. The paper shows that the proposed IBFD self-backhauling network suffers from limitations posed by the inter-tier 
interference and the bandwidth division at the backhauling M-BS. Though IBFD capability helps improve the average 
rates (conditioned on user being covered) by a factor less than double, the coverage in such a network is close to half of its FDD 
counterpart. Analytical framework for exact quantification of coverage under varying parameters such as P-BS density, bias, pathloss exponent, etc. has been derived.
The proposed architecture requires only small cells to be 
IBFD-enabled, which is practically more suitable than IBFD operation on M-BS and user devices owing to their high transmit powers and small form factors, respectively. The paper uses an example IBFD network for clear exposition though similar analysis holds for time-division duplexed (TDD) networks, for instance, by replacing frequencies $f1$ and $f2$ by time-slots $t1$ and $t2$.
\bibliography{Master}
\appendices
\section{Joint PDF Of Distance Pair $(r_s,r)$}
\label{app:App1_AB_PDF}
The joint pdf of the distance pair $(r_s,r)$ that characterizes the joint density of the access-backhaul nodes is derived here.
\subsubsection{$\Delta_m \ge 1$}
Considered first is the arrangement as shown in Fig.~\ref{fig:Fig4_PcSc_SysArrangements}. The representations in Fig.~\ref{fig:Fig4_PcSc_SysArrangements} depict cases 
depending on the location of the backhauling M-BS, provided the user associates with the P-BS at a point $r_s$. Parts $(A),\, (B) \text{ and } (C)$ represent cases 
where the backhaul disc (circle with radius $\|r\|$) and the inner macro disc (circle with radius denoted by $OM' = \Delta_m^{-1}r_s^{\al_s / \al_m}$)
\begin{itemize}
 \item do not intersect
 \item have finite intersection area
 \item represent a single disc (i.e. the backhaul disc engulfs the inner macro disc)
\end{itemize}
Following this, the pdf is composed of three sub-parts depending on where the backhauling M-BS is found.
\begin{itemize}
 \item \emph{Case $(A)$: $0 < \|r\| \le \nu_-(r_s,\Delta_m,\al_s,\al_m)$} In this case the density function is given by the void probabilities \cite{Stoyan} of $\Phi_s$ over $\Phi_s \cap B(o,\|r_s\|)^c$ 
 and of $\Phi_m$ over $\Phi_m \cap (B(o,\|\Delta_m^{-1}r_s^{\al_s / \al_m}\|) \cup B(r_s,\|r\|)^c$ i.e.
 \begin{equation}
 \begin{aligned}
  f(r_s,r) &= \frac{\partial F(r_s,r)}{\partial r_s \partial r}\\
  &= \frac{\partial\left( e^{-\lambda_s \pi r_s^2}\, e^{-\lambda_m \pi \left((\Delta_m^{-1}r_s^{\al_s / \al_m})^{2} + r^2\right)}\right)}{\partial r_s \partial r}\\
  &= \displaystyle\frac{4 e^{-\pi\left(r^2\lambda_m + \frac{r_s^{\frac{2\al_s}{\al_m}}}{\Delta_m^2}\lambda_m + r_s^2 \lambda_s \right)} \pi^2 r \lambda_m \left( \frac{r_s^{\frac{2\al_s}{\al_m}}}{\Delta_m^2} \al_s \lambda_m + r_s^2 \al_m \lambda_s\right)} {r_s \al_m}\,\,.
  \label{eqn:Eqn12_density1} 
  \end{aligned}
 \end{equation}
 
 \item \emph{Case $(B)$: $\|r\| \in \nu_-^+(r_s,\Delta_m,\al_s,\al_m)$} This case has a finite intersection area between the 
 backhaul disc and the inner macro disc. Thus the void probabilities and so the density is calculated as follows.
 \begin{equation}
  \begin{aligned}
   f(r_s,r) = \displaystyle\frac{\partial\left( e^{-\lambda_s \pi r_s^2}\, e^{-\lambda_m \left(\pi(\Delta_m^{-1}r_s^{\al_s / \al_m})^{2} + \pi r^2 - \,\operatorname{lens}(M_1,\, M_2) \right)}\right)}{\partial r_s \partial r}\,\,,
   \label{eqn:Eqn13_density2}
  \end{aligned}
 \end{equation}
where $lens(M_1,\,M_2)$ denotes the area of the lens formed between points $M_1$ and $M_2$ of Fig.~\ref{fig:Fig4_PcSc_SysArrangements} (part $(B)$) and is 
given as in \cite{Mathematica}.

\item \emph{Case $(C)$: $ \|r\| > \nu_+(r_s,\Delta_m,\al_s,\al_m)$} This case has the backhaul disc completely engulf the inner macro disc and the density 
is given as follows.
\begin{equation}
 \begin{aligned}
  f(r_s,r) &= \frac{\partial\left( e^{-\lambda_s \pi r_s^2}\, e^{\lambda_m \pi r^2}\right)}{\partial r_s \partial r}\\
  &= 4 \pi ^2 \lambda_m \lambda_s r\, r_s e^{-\pi  \left(\lambda_m r^2+\lambda_s r_s^2\right)}\,\,.
  \label{eqn:Eqn14_density3} 
  \end{aligned}
 \end{equation}
\end{itemize}

\subsubsection{$0 < \Delta_m < 1$}
For this case the radii of the discs depicted in Fig.~\ref{fig:Fig4_PcSc_SysArrangements} change as $r_s/\Delta_m > r_s$. Similar three cases are depicted in 
Fig.~\ref{fig:Fig5_PcSc_SysArrangements_DeltaLess1}.
\begin{itemize}
\item \emph{Case $(A)$: $0 < \|r\| < \mu_-(r_s,\Delta_m,\al_s,\al_m)$} This case is a zero probability case since 
it is already known that there is no M-BS within radius $\|r_s\|^{\al_s / \al_m} \Delta_m^{-1}$.

\item \emph{Case $(B)$: $\|r\| \in \mu_-^+(r_s,\Delta_m,\al_s,\al_m)$} 
Equation \eqref{eqn:Eqn13_density2} could directly be used to give this density function.

\item \emph{Case $(C)$: $ \|r\| \ge \mu_+(r_s,\Delta_m,\al_s,\al_m)$} 
This case is similar to the engulfment case as \emph{Case $(C)$)} for $\Delta_m \ge 1$. 
Hence, the third part of the density function of Equation \eqref{eqn:Eqn14_density3} could directly be used.

\end{itemize}
 \section{Small Cell Coverage Probability}
\label{app:App2_PcSc}
Coverage probability of a user, given it is associated to a small cell is derived here. The coverage probability is denoted by $\Pr\left(\SIR_{us} > T_s,\, \SIR_{sm} > T_b \mid \varepsilon_s\right)$ .
\begin{equation}
 \Pr\left(\SIR_{us} > T_s,\, \SIR_{sm} > T_b \mid \varepsilon_s\right) = \E_{r_s,r}\left[ \underbrace{\Pr\left(\SIR_{us} > T_s,\, \SIR_{sm} > T_b \mid r_s,\,r\right)}_{P_{u,s}^{'f}(T_s,T_b)} \right]
 \label{eqn:Eqn11_Pc_P-BS_FD}
\end{equation}

where, $r_s$ and $r$ denote the points of Fig.~\ref{fig:Fig3_PcSc} and are varied over a region so that the event $\varepsilon_s$ of 
equation \eqref{eqn:Eqn5_Pcu} is always true.

The inner probability term of equation \eqref{eqn:Eqn11_Pc_P-BS_FD} is derived below. In interest of better clarity, the representation of conditioning on points $r_s$ and $r$ is dropped in the following derivation.
 \begin{align*}
  P_{u,s}^{'f}(T_s,T_b) &= \Pr\left(  \frac{P_s g_{o r_s} \PL{r_s}{-\al_s}}{\sum\limits_{z \in A_1^c}P_s g_{oz} \PL{z}{-\al_s} + \sum\limits_{z \in A_2^c}P_m g_{oz} \PL{z}{-\al_m} + P_m g_{or_m} \PL{r_m}{-\al_m} } > T_s, \right. \\
  & \left. \frac{P_m g_{r_s r_m} \PL{r}{-\al_m}}{\sum\limits_{z \in A_1^c}P_s g_{r_sz} \PL{z-r_s}{-\al_s} + \sum\limits_{z \in A_2^c}P_m g_{r_sz} \PL{z-r_s}{-\al_m} + \beta P_s } > T_b   \right) \\
  &\stackrel{(a)}= \Pr\left( g_{o r_s} > k_s \PL{r_s}{\al_s} I_1 ,\, g_{r_s r_m} > k_m \PL{r}{\al_m} I_2 \right)\,, 
  \end{align*}
where $(a)$ results by taking $k_s = T_s/P_s \text{ and } k_m = T_b/P_m$ and $I_1$ and $I_2$ are short notations for interference terms in $\SIR_{us} \text{ and } \SIR_{sm}$ terms. 
Areas $A_1$ and $A_2$ are as defined in \eqref{eqn:Eqn6_Pcus_FD}. Following from 
the result above,
  \begin{align*}
  P_{u,s}^{'f}(T_s,T_b) &\stackrel{}= \E_{I_1,I_2}\left[ \Pr\left( g_{o r_s} > k_s \PL{r_s}{\al_s} I_1 ,\, g_{r_sr_m} > k_m \PL{r}{\al_m} I_2 \mid I_1, I_2  \right) \right] \\
  &\stackrel{(b)}= \E_{I_1,I_2} \left[   \Pr(g_{o r_s} > k_s \PL{r_s}{\al_s}I_1)\, \, \Pr(g_{r_sr_m} > k_m \PL{r}{\al_m}I_2)  \mid I_1, I_2    \right] \\
  &\stackrel{(c)}= \E_{I_1,I_2} \left[  e^{-k_s \PL{r_s}{\al_s}I_1}\, \, e^{-k_m \PL{r}{\al_m}I_2}  \mid I_1, I_2  \right] \\
  &\stackrel{(d)}= \E_{g_{oz},g_{r_sz},\Phi_s}\left[ \prod\limits_{z \in A_1^c} e^{\left(-k_s\PL{r_s}{\al_s} P_s g_{oz} \PL{z}{-\al_s}\right)}e^{\left(-k_m \PL{r}{\al_m}P_s g_{r_sz} \PL{z-r_s}{-\al_s} \right)} \right]
  \end{align*}
  \begin{align*}
  &\qquad \qquad \qquad \E_{g_{oz},g_{or_m},g_{r_sz},\Phi_m}\left[\prod\limits_{z \in A_2^c} e^{\left( -k_s \PL{r_s}{\al_s}P_m (g_{oz} \PL{z}{-\al_m}+g_{or_m} \PL{r_m}{-\al_m})\right)}e^{\left(-k_m \PL{r}{\al_m}P_m g_{r_sz} \PL{z-r_s}{-\al_m}\right)}  \right]\\
  &\qquad \qquad \qquad e^{\left(-k_m \PL{r}{\al_m} \beta P_s\right)}.
  \end{align*}
Assumption of independently fading links gives result to $(b)$. Result in $(c)$ is based on the assumption of fading power being 
exponentially fading with unit mean. Expanding $I_1$, $I_2$ and separating terms belonging to the independent processes 
$\Phi_m \text{ and } \Phi_s$, the result is as given by $(d)$. Simplifying further,       
    \begin{align*}
  P_{u,s}^{'f}(T_s,T_b) &\stackrel{(e)}=\, \E_{g_{oz},g_{r_sz},\Phi_s}\left[ \prod\limits_{z \in A_1^c} e^{\left(-k_s\PL{r_s}{\al_s} P_s g_{oz} \PL{z}{-\al_s}\right)}e^{\left(-k_m \PL{r}{\al_m}P_s g_{r_sz} \PL{z-r_s}{-\al_s} \right)} \right]\\
  &\E_{g_{oz},g_{r_sz},\Phi_m}\left[\prod\limits_{z \in A_2^c} e^{\left( -k_s \PL{r_s}{\al_s}P_m (g_{oz} \PL{z}{-\al_m})\right)}e^{\left(-k_m \PL{r}{\al_m}P_m g_{r_sz} \PL{z-r_s}{-\al_m}\right)}  \right]\\
  & \E_{g_{or_m}}\left[ e^{-k_s \PL{r_s}{\al_s}P_m g_{or_m} \PL{r_m}{-\al_m}} \right] e^{\left(-k_m \PL{r}{\al_m} \beta P_s\right)}\\
  &\stackrel{(f)}=\, \exp\left(-\lambda_s \int\limits_{z \in A_1^c} 1 - \frac{1}{(1+k_s\PL{r_s}{\al_s}P_s\PL{z}{-\al_s})(1+k_m\PL{r}{\al_m}P_s \PL{z-r_s}{-\al_s})}\, \D z \right)\\
  & \exp\left(-\lambda_m \int\limits_{z \in A_2^c} 1 - \frac{1}{(1+k_s\PL{r_s}{\al_s}P_m\PL{z}{-\al_m})(1+k_m\PL{r}{\al_m}P_m \PL{z-r_s}{-\al_m})}\, \D z \right)\\
  &  e^{\left(-k_m \PL{r}{\al_m} \beta P_s\right)} \left(\frac{1}{1+k_s\PL{r_s}{\al_s} P_m \PL{r_m}{-\al_m}}\right).
 \end{align*}
Result in $(e)$ simply follows from $(d)$ by separating terms that depend on either $\Phi_m$ or $\Phi_s$ and the ones that do not. The 
final step in $(f)$ uses the probability generating functional \cite{ganti2012stochastic} of a PPP and the result of the work in 
\cite{Jeff}, as was used in \eqref{eqn:Eqn10}. Plugging the result of $(f)$ in \eqref{eqn:Eqn11_Pc_P-BS_FD} and substituting the 
expectation with the pdf $f(r_s,r)$ gives the result of \eqref{eqn:Eqn6_Pcus_FD}. 
\section{M-BS coverage probability in IBFD setting}
\label{app:App_MBS_Covg_IBFD}
The coverage probability under M-BS could be derived as shown below:
\begin{equation}
\begin{aligned}
&\Pr\left(\SIR_{um} > T_m, \varepsilon_m\right) \stackrel{(a)}= \E_{r_m',r_s}\left[\Pr(\SIR_{um} > T_m \mid r_m',\,r_s)\right]\\
&= \E_{r_m',r_s}\underbrace{\left[\Pr\left(\frac{ P_m g_{o r_m'} r_m'^{-\alpha_m}}{\sum\limits_{z \in \Phi_m \cap B(o,r_m')^c}P_m g_{oz} \|z\|^{-\alpha_m}+\sum\limits_{z\in\Phi_s\cap B(o,\Delta_s r_m'^{\al_m / \al_s})^c}P_sg_{o z}\|z\|^{-\alpha_s}} > T_m \mid r_m',\,r_s\right)\right]}_{F(\Phi_m, \Phi_s)}.\\
\end{aligned}
\label{eqn:Eqn9_Pcum_FD}
\end{equation}
The result in $(a)$ follows as $r_s$ and $r_m'$ are varied so that event $\varepsilon_m$ is always true which is in accordance with the limits of integration 
in \eqref{eqn:Eqn7}. Let the interference (denominator) term in \eqref{eqn:Eqn9_Pcum_FD} be denoted by $I(\Phi_m,\Phi_s)$, where $\Phi_x$ could be thought of as the process defining the 
entire characteristics of tier $x$. Therefore, $\Phi_m \triangleq \left\{ \lambda_m,P_m,T_m,B_m \right\} \text{ and } \Phi_s~\triangleq~\left\{ \lambda_s,P_s,T_s,B_s \right\}$.
The term $F(\Phi_m, \Phi_s)$ is simplified as follows.
\begin{equation}
\begin{aligned}
F(\Phi_m, \Phi_s) &= \Pr\left(\frac{ P_m g_{o r_m'} r_m'^{-\alpha_m}}{I(\Phi_m,\Phi_s)} > T_m \mid r_m',\,r_s\right)\\
&\stackrel{(a)}= \E_{I(\Phi_m,\Phi_s)}\left[ e^{-(T_m/P_m) r_m'^{\alpha_m} I(\Phi_m,\Phi_s) }  \mid r_m'\right],
\end{aligned}
\end{equation}
where, $(a)$ follows from $g_{o r_m'}$ being a unit mean exponential random variable and $F(\Phi_m,\Phi_s)$ being independent of $r_s$. 
Continuing further,\\
\begin{equation}
\begin{aligned}
&F(\Phi_m, \Phi_s) \stackrel{}= \E_{I(\Phi_m,\Phi_s)}\left[ e^{-T_m r_m'^{\alpha_m} \sum_{z\in \Phi_m \cap B(o,r_m')^c}g_{oz} \|z\|^{-\alpha_m} } e^{-\frac{T_m}{P_m}P_s r_m'^{\alpha_m} \sum_{x\in \Phi_s \cap B(o,\Delta_s r_m'^{\al_m / \al_s})^c}g_{ox} \|x\|^{-\alpha_s} } \mid r_m' \right]\\
&\stackrel{(b)}= \E_{g_{oz},\Phi_m}\left[ \prod_{z\in \Phi_m \cap B(o,r_m')^c}e^{-T_m r_m'^{\alpha_m} g_{oz} \|z\|^{-\alpha_m} } \mid r_m' \right]\, \E_{g_{ox},\Phi_s}\left[ \prod_{x\in \Phi_s \cap B(o,\Delta_s r_m'^{\al_m / \al_s})^c}e^{-\frac{T_m}{P_m}P_s r_m'^{\alpha_m} g_{ox} \|x\|^{-\alpha_s} } \mid r_m' \right]\\
&\stackrel{(c)}= e^{-\pi r_m'^2 \lambda_m T_m^{2/\alpha_m} \int\limits_{T_m^{-2/\alpha_m}}^\infty \frac{1}{1 + t^{\alpha_m/2}}\, \D t}\;e^{ -\pi r_m'^{2 \al_m / \al_s} \lambda_s\left(\frac{P_s T_m}{P_m}\right)^{\frac{2}{\alpha_s}} \int\limits_{\left(\frac{B_s}{B_m T_m}\right)^{\frac{2}{\alpha_s}} }^\infty \frac{1}{1+t^{\alpha_s/2}}\,\D t, },               
\end{aligned}
\label{eqn:Eqn10}
\end{equation}
where $(b)$ follows from the independence of the processes $\Phi_m$ and 
$\Phi_s$ and the assumption of fading on links being independent. Finally,  $(c)$ follows from the single-tier coverage probability result in \cite{Jeff}.

\section{M-BS to User Rate}
\label{app:App_MBS_User_Rate_IBFD}
Rate under M-BS in the IBFD setting could be derived as follows:
\begin{equation}
\begin{aligned}
\E\left[R_{um} \mid \SIR_{um} > T_m\right] &= \E\left[\overline{\eta}\log(1+\SIR_{um})\mid \SIR_{um} > T_m\right]\\
&\stackrel{(a)}= \overline{\eta}\int\limits_{t\ge 0} \Pr(\log(1+\SIR_{um}) > t \mid \SIR_{um} > T_m)\, \D t \\
&= \overline{\eta}\int\limits_{t > 0}  \Pr(\SIR_{um}> 2^t-1 \mid \SIR_{um} > T_m )\, \D t\\
&= \frac{\overline{\eta}}{\Pr(\SIR_{um} > T_m)}\int\limits_{t > 0}  \Pr(\SIR_{um}> \max(2^t-1,\,T_m))\, \D t,
\end{aligned}
\label{eqn:Eqn_Rate_um1}
\end{equation}
where $(a)$ follows from the fact that the rate $R_{um}$ is a positive random variable.

\section{M-BS to P-BS to User Rate ($\min\left(R_{us},\, R_{sm}\right)$)}
\label{app:App_MBS_PBS_User_Rate}
The net rate obtained from P-BS is a minimum over M-BS to P-BS (backhaul) and P-BS to user (access) rates. This is derived as follows:
\begin{equation}
\begin{aligned}
&\E\left[\min\left(R_{us},\, R_{sm}\right)\mid \SIR_{us,sm} > T_{s,b}\right] = \int\limits_{t > 0} \Pr\left( \min\left(R_{us},\, R_{sm}\right) > t \mid \SIR_{us,sm} > T_{s,b} \right) \, \D t\\
&= \int\limits_{t > 0} \Pr\left( R_{us} > t,\, R_{sm} > t \mid \SIR_{us,sm} > T_{s,b} \right)\, \D t\\
&= \int\limits_{t > 0} \Pr\left( \log(1+\SIR_{us}) > t,\,\, \frac{\eta}{n} \log(1+\SIR_{sm}) > t \mid \SIR_{us,sm} > T_{s,b} \right)\, \D t\\
&= \int\limits_{t > 0} \Pr\left( \SIR_{us} > 2^{t}-1,\,\, \SIR_{sm} > 2^{\frac{nt}{\eta}}-1 \mid \SIR_{us,sm} > T_{s,b} \right)\, \D t\\
&= \frac{1}{\Pr(\SIR_{us,sm} > T_{s,b})}\int\limits_{t > 0} \Pr\left( \SIR_{us} > \max(2^{t}-1,\,T_s),\,\, \SIR_{sm} > \max(2^{\frac{nt}{\eta}}-1,\,T_b)\right)\, \D t.
\end{aligned}
\label{eqn:Eqn_RateSc}
\end{equation}

\end{document}